\newcommand*\samethanks[1][\value{footnote}]{\footnotemark[#1]}
\theoremstyle{definition}
\colorlet{shadecolor}{gray!20}
\newtheorem{theorem}{Theorem}
\newtheorem{corollary}[theorem]{Corollary}
\newtheorem{lemma}[theorem]{Lemma}
\newtheorem{definition}[theorem]{Definition}
\newcommand{\PCF}{\textsc{RecursivePCF}}
\newcommand{\SSW}{{\textsc{SSW}}}
\newcommand{\finddeltae}{{\textsc{FirstTightEdgeEvent}}}
\newcommand{\finddeltaf}{{\textsc{FirstTightFamilyEvent}}}
\newcommand{\findtightfamily}{{\textsc{FindTightFamily}}}
\newcommand{\G}{\ensuremath{G}}
\newcommand{\V}{\ensuremath{V}}
\newcommand{\E}{\ensuremath{E}}
\newcommand{\family}
{\ensuremath{\mathcal{S}}}
\newcommand{\Va}{\ensuremath{V_a}}
\newcommand{\Vi}{\ensuremath{V_i}}
\newcommand{\Er}{\E_{single}}
\newcommand{\R}{R}
\newcommand{\I}{I}
\newcommand{\Opt}{\text{OPT}}
\newcommand{\Optr}{\Opt_\R}
\newcommand{\Alg}{\text{SSW}}
\newcommand{\Sol}{\text{SOL}}
\newcommand{\Solr}{\Sol_\R}
\newcommand{\Rec}{\text{REC}}
\newcommand{\D}{\mathcal{D}}
\newcommand{\Dm}{\D_m}
\newcommand{\Dopt}{\D_{\Opt}}
\newcommand{\Dpaid}{\D_{paid}}
\newcommand{\Dunpaid}{\D_{unpaid}}
\newcommand{\Dr}{\D_\R}
\newcommand{\Dalg}{\D_\Alg}
\newcommand{\Dbase}{\D_{base}}
\newcommand{\blueDbase}{{\color{Blue} \Dbase}}
\newcommand{\Mbase}{M_{base}}
\newcommand{\redMbase}{{\color{Green} \Mbase}}
\newcommand{\Dsingle}{\D_{single}}
\newcommand{\Dmult}{\D_{multiple}}
\newcommand{\cc}{\mathcal{C}\mathcal{C}}
\newcommand{\pen}{\mathcal{P}}
\newcommand{\F}{F}
\newcommand{\Fopt}{\F_{\Opt}}
\newcommand{\Falg}{\F_{\Alg}}
\newcommand{\Frec}{\F_{\Rec}}
\newcommand{\Fr}{\F_\R}
\newcommand{\Fc}{\F_{C}}
\newcommand{\y}{y}
\newcommand{\yS}{\y_{S}}
\newcommand{\ysingle}{\y_{single}}
\newcommand{\ymult}{\y_{multiple}}
\newcommand{\cost}{cost}
\newcommand{\costr}{\cost_{\R}}
\newcommand{\ce}{c}
\newcommand{\pir}{\pi_\R}
\newcommand{\closure}{\textsc{closure}}
\newcommand{\comp}[1]{#1^c}
\newcommand{\deltaS}{\delta(S)}
\newcommand{\degree}{deg}
\newcommand{\activesets}{\mathcal{C}_a}
\newcommand{\positivesets}{\mathcal{C}_+}
\newcommand{\currentsets}{\mathcal{M}}
\newcommand{\epse}{\varepsilon_{edge}}
\newcommand{\epsf}{\varepsilon_{family}}
\newcommand{\epst}{\varepsilon}
\newcommand{\redS}{\textcolor{Red}{S}}
\newcommand{\blackC}{\textcolor{Black}{C}}
\newcommand{\blueCone}{\textcolor{Blue}{C_1}}
\newcommand{\greenCtwo}{\textcolor{Green}{C_2}}
\begin{document}
\title{Prize-Collecting Forest with Submodular Penalties: Improved Approximation}
%
%
\author{
Ali Ahmadi\thanks{University of Maryland.}\\
\texttt{ahmadia@umd.edu}
\and
Iman Gholami\samethanks\\
\texttt{igholami@umd.edu}
\and
MohammadTaghi Hajiaghayi\samethanks\\
\texttt{hajiagha@umd.edu}
\and 
Peyman Jabbarzade\samethanks\\
\texttt{peymanj@umd.edu}
\and
Mohammad Mahdavi\samethanks\\
\texttt{mahdavi@umd.edu}
}
%
%
\maketitle              
\begin{abstract}
    \emph{Constrained forest problems} form a class of graph problems where specific connectivity requirements for certain cuts within the graph must be satisfied by selecting the minimum-cost set of edges.
    The prize-collecting version of these problems introduces flexibility by allowing penalties to be paid to ignore some connectivity requirements.

    Goemans and Williamson~\cite{DBLP:journals/siamcomp/GoemansW95} introduced a general technique and developed a 2-approximation algorithm for constrained forest problems. 
    Further, Sharma, Swamy, and Williamson~\cite{DBLP:conf/soda/SharmaSW07} extended this work by developing a 2.54-approximation algorithm for the prize-collecting version of these problems. Motivated by the generality of their framework, which includes problems such as Steiner trees, Steiner forests, and their variants, we pursued further exploration.

    We present a significant improvement by achieving a 2-approximation algorithm for this general model, matching the approximation factor of the constrained forest problems. 

\end{abstract}
%
%
%

\section{Introduction}
In the context of connectivity problems, many algorithms focus on satisfying specific demands while minimizing costs. 
These problems typically involve weighted graphs, where we select a subset of edges to fulfill these demands. 
The Minimum Spanning Tree (MST) is a well-known example where efficient algorithms can find the optimal solution. 
However, numerous generalizations of MST are NP-hard, meaning finding guaranteed optimal solutions becomes computationally intractable.
Therefore, for these problems, researchers have shifted their focus to developing effective approximate solutions. 
This work addresses  constrained forest problems with submodular penalty functions for unmet demands, achieving a 2-approximation guarantee.

We formally define the problem on a weighted graph $\G = (\V, \E)$ with a weight function $\ce: \E \rightarrow \mathbb{R}_{\ge 0}$ assigning a cost to each edge. 
Additionally, we have a penalty function $\pi: 2^{2^{\V}} \rightarrow \mathbb{R}_{\ge 0}$ defined on families of subsets of vertices. 
The objective is to find a subset of edges $F \subseteq \E$ and define a family of unsatisfied sets $\family \subseteq 2^{\V}$ such that $\family$ contains all sets that do not cut $F$ (i.e., no edge in $F$ has exactly one endpoint in those sets). 
We aim to minimize the total cost, which is the sum of the cost of the selected edges, $\ce(F)$, plus the penalty for unsatisfied demands, $\pi(\family)$.
For the rest of the paper, we refer to this problem as the \emph{prize-collecting forest problem (PCF).}

The penalty function captures the cost associated with leaving certain connectivity requirements unmet. Since its domain is exponential in size, storing penalties for all inputs is impossible. Thus, we assume the availability of an oracle that, given a family of subsets of vertices, returns its penalty. This assumption is reasonable for all submodular problems, as they have an exponential domain size, and unlike additive functions, the value of any subset of elements cannot be computed as a function of the values for individual elements. 
The penalty function also satisfies the following properties:
\begin{itemize}
\item Empty set property: $\pi(\emptyset) = 0$.
\item Submodularity: $\forall \family_1, \family_2 \subseteq 2^{\V}$: $\pi(\family_1) + \pi(\family_2) \ge \pi(\family_1 \cup \family_2) + \pi(\family_1 \cap \family_2)$.
\item Monotonicity: $\forall \family_1 \subseteq \family_2 \subseteq 2^{\V}$: $\pi(\family_1) \le \pi(\family_2)$.
\item Union property: $\forall S_1, S_2 \subseteq \V$: $\pi(\{S_1, S_2\}) = \pi(\{S_1, S_2, S_1 \cup S_2\})$.
\item Complementarity: $\forall S \subseteq \V$: $\pi(\{S\}) = \pi(\{S, \comp{S}\})$.
\end{itemize}

The empty set property establishes the baseline: no penalty is incurred if all connectivity requirements are satisfied.
Submodularity indicates that having a larger set of unsatisfied sets reduces the marginal cost of not satisfying additional sets.
Monotonicity ensures that the penalty increases as the number of unsatisfied sets grows.


The last two properties, union and complementarity, relate to how the penalty treats unsatisfied sets.
The union property states that the penalty for violating the connectivity constraint on the union of two unsatisfied sets remains unchanged compared to the penalties of the individual sets. 
This is because the union inherits the lack of connectivity from its subsets.
Complementarity asserts that the penalty for a set is the same as the penalty for its complement.
This holds true because both the set and its complement cut the same edges of the graph, and failing to connect either results in the same connectivity situation.
These properties together imply that for any $\family \subseteq 2^{\V}$, we have $\pi(\family) = \pi(\closure(\family))$, where $\closure(\family)$ is the smallest superset of $\family$ that is closed under union and complement, and consequently, intersection and set difference operations.

While numerous studies focus on specific cases of our model, some of which are discussed later, the current state-of-the-art for this general problem is a \(2.54\)-approximation algorithm by Sharma, Swamy, and Williamson~\cite{DBLP:conf/soda/SharmaSW07}, which employs LP rounding. 
They also developed a \(3\)-approximation algorithm based on a primal-dual approach, which our method frequently uses as a black-box subroutine. 
Through a combinatorial analysis, we demonstrate that our approach achieves a better approximation factor by establishing tighter bounds on both the optimal solution and our algorithm’s output.

Several prior studies in connectivity research address similar problems.
Notably, Agrawal, Klein, and Ravi~\cite{DBLP:journals/siamcomp/AgrawalKR95} explored a generalization of Steiner forests, focusing on connecting vertex pairs with $r$ paths.
Building on this work, Goemans and Williamson~\cite{DBLP:journals/siamcomp/GoemansW95} introduced a powerful technique that yielded constant-factor approximation algorithms for various special cases of our problem, inspiring numerous subsequent studies including ours. 
Their model leverages vertex subsets to define connectivity demands. 
More precisely, they define a function $f: 2^{\V} \rightarrow \{0,1\}$ such that for any subset $S \subseteq \V$, $f(S)=1$ if there is a requirement to connect vertices within $S$ to those outside, and $f(S)=0$ otherwise. 
We extend their model by assigning penalties to each subset of vertices , allowing these penalties to be zero if connecting the set's vertices to the rest of the graph is not required (i.e., $f(S)=0$). 
Additionally, beyond prize-collecting, there are other generalizations where the function \( f \) can exceed \( 1 \) (see \cite{DBLP:conf/stoc/WilliamsonGMV93,DBLP:journals/combinatorica/Jain01}). This means that a minimum number of edges is mandated for each cut.

Expanding upon the work of Goemans and Williamson~\cite{DBLP:journals/siamcomp/GoemansW95}, Sharma, Swamy, and Williamson~\cite{DBLP:conf/soda/SharmaSW07} are not the only ones who have considered the prize-collecting version of the generalized forest problem. 
Other works~\cite{DBLP:conf/soda/HayrapetyanST05,DBLP:conf/waoa/NagarajanSW08,DBLP:conf/ipco/HajiaghayiKKN10,DBLP:conf/soda/BateniCEHKM11} explore variations of the prize-collecting version, differing in the definition of the penalty function and the connectivity constraint.
For example,~\cite{DBLP:conf/waoa/NagarajanSW08} study a generalization of~\cite{DBLP:conf/soda/SharmaSW07} with a $K$ connectivity requirement. 
They use the algorithm from~\cite{DBLP:conf/soda/SharmaSW07} as a black box and provide a $2.54\log(K)$ approximation algorithm.
However, since the primal-dual properties of~\cite{DBLP:conf/soda/SharmaSW07} are not trivially observed in our algorithm, a key question is whether it is possible to improve the result of~\cite{DBLP:conf/waoa/NagarajanSW08} and achieve a $2\log(K)$ approximation algorithm for their problem.

One of the well-known special cases of our model is the Steiner tree problem, where the goal is to connect a designated set of vertices. 
This problem has an obvious 2-approximate solution.  
Many works have addressed Steiner tree~\cite{DBLP:journals/algorithmica/Zelikovsky93,DBLP:journals/jco/KarpinskiZ97,DBLP:journals/siamdm/RobinsZ05}, leading to a
$(\ln{4}+\epsilon)$-approximation algorithm~\cite{DBLP:conf/stoc/ByrkaGRS10}.
Similarly, for the prize-collecting Steiner tree problem, where a penalty must be paid for each unconnected vertex, studies began with a 3-approximate solution by ~\cite{DBLP:journals/mp/BienstockGSW93}, continued with a 2-approximation algorithm by~\cite{DBLP:journals/siamcomp/GoemansW95}, and then~\cite{DBLP:journals/siamcomp/ArcherBHK11} broke the 2-approximation barrier. Recently,~\cite{DBLP:conf/stoc/AhmadiGHJM24} achieved a 1.79-approximation algorithm.
These advancements in approximating special cases of our model, all achieving factors below 2, naturally raise the question: can we develop an algorithm with an approximation factor strictly less than 2 for our general problem?

The Steiner forest problem, a generalization of Steiner tree, considers a set of demands where each demand specifies a set of vertices that must be connected.
The 2-approximation algorithm by~\cite{DBLP:journals/siamcomp/AgrawalKR95} remained unimproved for over three decades until a recent breakthrough by~\cite{ahmadi2025breaking}, which provided a slight improvement.
This suggests that better-than-2 approximations may be possible for more general problems.
The techniques introduced in~\cite{ahmadi2025breaking} offer promising new directions for improving approximation algorithms in our broader model.

It is worth noting that the prize-collecting version of the Steiner forest (PCSF), another special case of our model, had a 2.54-approximation algorithm~\cite{DBLP:conf/soda/HajiaghayiJ06} until a recent work by Ahmadi, Gholami, Hajiaghayi, Jabbarzade, and Mahdavi~\cite{10.1145/3722551} achieved a 2-approximation for this problem. 
This development opens doors to potential improvements in our model's approximation factor as well.
Our algorithm leverages their general recursive approach, a powerful tool for designing algorithms in connectivity problems (e.g., ~\cite{DBLP:conf/stoc/AhmadiGHJM24}). 
Additionally, we demonstrate the applicability of their approach to a more general model.

Compared to PCSF problem, designing an algorithm for our model introduces significantly greater complexity.
One major source of this complexity is the penalty structure. In PCSF, penalties are defined for \( n^2 \) elements, while in our model, penalties are defined for \( 2^n \) elements, with the penalty function applicable to any subset of them. This leads to \( 2^{2^n} \) potential combinations, which considerably increases the computational complexity. Attempting to access or compute all penalties would result in exponential runtime, which is infeasible.

Additionally, the use of submodular penalty functions further complicates the analysis. With additive penalty functions, if both our solution and the optimal solution pay penalties for overlapping sets of elements, we can simplify the analysis by ignoring those elements, as both solutions are charged the same fixed penalty. However, with submodular functions, the penalty for a set of elements is not independent; it depends on the presence of other elements in the penalty calculation. This dependency implies that the penalty for a given set of elements may vary between our solution and the optimal solution, based on which other elements are penalized. Consequently, a more refined analysis is required to account for these interactions accurately.



\subsection{Overview of Contribution}

In this paper, we present a 2-approximation algorithm for \emph{prize-collecting forest problem (PCF)}, formally proving the following theorem:

\begin{theorem}
    There exists a deterministic polynomial-time 2-approximation algorithm for the \emph{prize-collecting forest problem}
    with a submodular penalty function defined on the family of subsets of vertices for unsatisfied connectivity requirements.
\end{theorem}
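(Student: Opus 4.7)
The plan is to adapt the recursive primal-dual framework that Ahmadi, Gholami, Hajiaghayi, Jabbarzade, and Mahdavi~\cite{10.1145/3722551} developed for the prize-collecting Steiner forest to the more general submodular setting. At a high level, I would use the \(3\)-approximation primal-dual algorithm of Sharma, Swamy, and Williamson~\cite{DBLP:conf/soda/SharmaSW07} as a black-box inner routine \SSW, wrap it inside a recursive scheme \PCF, and argue that the combined cost of chosen edges and paid penalties is at most twice a common dual lower bound.

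The key steps in order are: (1) Write the natural LP relaxation with edge variables \(x_e\) and family-indicator variables, with duals \(y_S\) growing moats around cuts \(S \subseteq \V\) and dual feasibility enforced against the submodular \(\pi\) via its oracle. (2) Run a controlled dual-growth phase that mimics \SSW, but pause it just before either an edge would become tight (\finddeltae) or a whole family \family\ would become tight against \(\pi(\family)\) (\finddeltaf), taking the earlier event as the phase endpoint. (3) When a family becomes tight, identify it via \findtightfamily, declare its sets paid, and recurse on the residual instance in which the penalty is replaced by the marginal \(\pi(\cdot \mid \family)\); the union and complementarity properties imply that this marginal is again monotone, submodular, closure-stable, and accessible through the same oracle. (4) At the base case, return the forest built by \SSW\ together with the accumulated paid families, and finally prune any edge whose deletion still separates no unpaid tight cut, so that the final forest is edge-minimal with respect to the demand structure.

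The main obstacle will be the charging argument in the presence of submodularity. Unlike the PCSF setting, the penalty the algorithm pays for a family declared in a deep recursive call is a marginal value \(\pi(\family_k \mid \family_1, \dots, \family_{k-1})\) rather than an independent quantity, so naively summing paid penalties across levels does not match \(\pi\) on the union. I expect to handle this by telescoping: the identity \(\pi(\closure(\family_1 \cup \dots \cup \family_\ell)) = \sum_k \pi(\family_k \mid \family_1, \dots, \family_{k-1})\), combined with the tightness witness \(\sum_{S \in \family_k} \yS = \pi(\family_k \mid \cdot)\) produced at the moment \family_k is declared paid, will express the total penalty as a telescoping sum equal to a suitable portion of \(\sum_S y_S\). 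Submodularity plus the union property is then used to argue that the optimum \(\Opt\) also pays at least this marginal structure, so the dual lower-bounds both quantities simultaneously.

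A secondary obstacle is polynomial running time despite the \(2^{2^{|\V|}}\) domain of \(\pi\); for this I would invoke polynomial-time submodular function minimization inside \findtightfamily\ to locate a tight family using only polynomially many oracle calls, and argue that the recursion depth is bounded by the polynomial number of distinct tight events in \SSW. Combining a standard moat bound \(\sum_{e \in F} \ce(e) \le 2\sum_S \yS\) for the forest cost, inherited from the Goemans--Williamson framework, with the telescoped penalty bound above yields \(\ce(F) + \pi(\family) \le 2 \sum_S \yS \le 2 \, \Opt\), giving the claimed deterministic polynomial-time \(2\)-approximation and matching the factor of the penalty-free constrained forest problem.
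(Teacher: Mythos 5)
Your proposal correctly identifies the high-level architecture (recurse with the marginal penalty $\pi(\cdot\mid\Dalg)$, use \SSW{} as a black box, and observe via Lemma~\ref{lm:margsub}-type reasoning that the marginal preserves the required axioms), but the charging argument at its core is wrong, and the gap is exactly where the 2-approximation has to come from. You claim that combining the moat bound $\ce(F)\le 2\sum_S\yS$ with a telescoped penalty bound yields $\ce(F)+\pi(\family)\le 2\sum_S\yS\le 2\,\Opt$. That cannot hold: the \SSW{} analysis already gives $\ce(F)\le 2\sum_S\yS$ and $\pi(\text{paid})\le\sum_S\yS$, and adding them gives $3\sum_S\yS$, which is precisely the 3-approximation. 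Telescoping the penalty across recursion levels rewrites the same quantity $\sum_{S\in\Dm}\yS\le\sum_S\yS$; it does not make the penalty term disappear or get absorbed into the factor-2 edge bound. You would need to show the dual $\sum_S\yS$ is a $\tfrac{1}{2}$-fraction of $\Opt$ rather than of $\Opt$, which is false. So the claimed final chain of inequalities breaks and the argument does not produce a 2-approximation.

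What your proposal is missing is the min-of-two-solutions mechanism and the structural splitting of $\Dalg$ that makes it work. The paper does \emph{not} build a single forest across recursion levels and telescope its penalty; it computes two candidate solutions at each level -- the \SSW{} forest and the recursively obtained forest -- and returns the cheaper one. The reason the minimum is within $2\Opt$ is that the two candidates have \emph{complementary} upper bounds: partition $\Dalg$ into $\Dpaid$, $\Dsingle$, $\Dmult$ according to how $\Opt$ treats each tight set (whether $\Opt$ effectively pays for it, cuts it once, or cuts it multiple times), and prove $\cost(\Alg)\le 2\cost(\Opt)-2\ymult+\pi(\Dalg)$ while $\cost(\Rec)\le 2\cost(\Opt)+2\ymult-\pi(\Dalg)$; averaging cancels $2\ymult-\pi(\Dalg)$. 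The bound on $\cost(\Rec)$ is the hard part: one must exhibit a cheap feasible solution $\Solr$ for the residual instance by deleting from $\Fopt$ the edges singly cut by $\Dsingle$, and then argue (crucially, via the closure structure and Lemma~\ref{lm:rforestpen}-style reasoning) that this deletion does not raise the residual penalty $\pir$, so $\costr(\Optr)\le\cost(\Opt)-(\pi(\Dopt)-\pi(\Dopt\mid\Dalg))-\ysingle$. None of this machinery -- the tripartition of $\Dalg$ against the optimum, the surgical edge-deletion construction, or the complementary pair of bounds -- appears in your proposal, and without it the telescoping identity does not get you below factor 3.
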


We use the 3-approximation algorithm of~\cite{DBLP:conf/soda/SharmaSW07} as a black box, explained in Section~\ref{sec:3-apx-alg}. 
Our algorithm, described in Section~\ref{sec:2_apx_alg}, is inspired by the iterative approach of the 2-approximation algorithm for prize-collecting Steiner forests~\cite{10.1145/3722551}.
However, our model introduces additional complexities due to the submodular penalty function and exponential connectivity requirements, leading to a more challenging analysis compared to the special case addressed in~\cite{10.1145/3722551}.
The analysis of our algorithm is presented in Section~\ref{sec:2_apx_analysis}, where Theorem~\ref{theorem:main_theorem} demonstrates the approximation guarantee, and Theorem~\ref{theorem:main_time_complexity} establishes the polynomial-time complexity of our algorithm.


\paragraph{Sharma, Swamy and Williamson (SSW) algorithm.}
We briefly explain the 3-approximation algorithm for the prize-collecting forest problem developed by Sharma, Swamy, and Williamson~\cite{DBLP:conf/soda/SharmaSW07}. 
We refer to this algorithm as \SSW{} and will call it multiple times in our algorithm.

The algorithm operates on a graph where edges are assumed to be curves with lengths equal to their costs.
It starts with an empty forest, and some of its connected components are chosen as active sets.
At each moment, active sets gradually color portions of uncolored edges that have exactly one endpoint within them.

There are two events that trigger changes in the algorithm:
\begin{itemize}
    \item An edge becomes fully colored and is added to the forest, merging its endpoints' connected components. The resulting new component becomes an active set.
    \item A family of subsets of vertices becomes tight, meaning the total duration of their activation equals the penalty for violating connectivity requirements for the entire family. 
    The algorithm deactivates these sets (now called tight sets) as paying the penalty becomes more cost-effective than further coloring edges with those sets.
\end{itemize}

Once there are no more active sets left, the algorithm enters a pruning phase.
Here, it examines tight sets. 
While tight sets stop coloring their adjacent edges, other active sets may continue to color these edges and eventually fully color them, leading to their inclusion in the forest.
If a tight set cuts exactly one edge in the forest, that edge is removed. 
Since the penalty for the tight set will be paid anyway, there is no need to pay the additional cost of including the edge. 
However, if a tight set cuts multiple edges, they are left untouched. 
This is because vertices outside the tight set might rely on those edges, along with a path within the tight set, to connect to each other.

Finally, the algorithm returns the remaining forest (the fully colored edges that were not removed during pruning) and pays the penalty associated with all tight sets. 
The key idea is that all sets not cutting any edges in the forest belong to the closure of the tight sets. 
This ensures that paying the penalty of tight sets along with the obtained forest is sufficient to satisfy all connectivity requirements.

\paragraph{Our Iterative Algorithm.} 
We propose an iterative algorithm inspired by~\cite{10.1145/3722551} that calls \SSW{} on a series of newly defined penalty functions. 
We begin by calling \SSW{} and consider its resulting forest as the initial solution. 
We also assume \SSW{} provides the family of tight sets it identified (denoted by $\Dalg$).
For the rest of our algorithm, we pay the penalty of $\Dalg$, and since its penalty is already paid, we do not need to satisfy the connectivity of sets in $\closure(\Dalg)$.

A key component of our approach is a newly defined penalty function, denoted by $\pir:2^{2^{\V}} \rightarrow \mathbb{R}_{\ge 0}$. 
The value,  $\pir(\family)$, represents the marginal cost of adding a new family $\family$ to the existing paid sets from SSW, $\Dalg$. 
Formally, $\pir(\family) = \pi(\family \cup \Dalg) - \pi(\Dalg)$.  
Since $\pir$ inherits the desired properties of a penalty function, we can recursively call our algorithm using $\pir$ as the new penalty function.  
This recursive call involves running \SSW{} with $\pir$ and then creating another penalty function based on the new tight sets identified by that \SSW{} call.

The recursion continues until a call to \SSW{} results in a forest with zero penalty for tight sets. 
This indicates that the newly created penalty function in this step is identical to the previous one, and further recursion would not yield improvement.
Once this termination condition is met, we stop the recursion and return the best forest encountered throughout the iterative process.

\paragraph{Analysis.} 
First, note that to prove the approximation guarantee of the \SSW{} algorithm, it can be shown that:
\begin{itemize}
    \item The optimal solution is at least equal to the total coloring duration of sets since every set colors at least one edge of the optimal solution at each moment, or the optimal solution pays penalty proportional to that amount of coloring.
    \item The forest \SSW{} returns costs at most twice the total coloring duration since, on average, every set colors at most two edges of the forest.
    \item The penalty \SSW{} pays is at most equal to the total coloring duration as \SSW{} only pays for tight sets whose coloring duration is equal to their penalty.
\end{itemize}
This information is sufficient to conclude that \SSW{} returns a 3-approximate solution.

For the analysis of our algorithm, we leverage its recursive nature by employing induction on the recursive steps. 
In the base case where $\pi(\Dalg) = 0$, we demonstrate that the \SSW{} algorithm returns a $2$-approximate solution, which follows trivially from the earlier analysis.
For our induction hypothesis, we assume that the recursive call provides a $2$-approximate solution for the instance with the penalty function $\pir$. 
In the induction step, we need to demonstrate that the best solution, chosen between the one returned by the recursive call and the initial solution provided by \SSW{}, is a $2$-approximate solution for the initial instance with the penalty function $\pi$. 

To prove the induction step, we focus on $\Dalg$. Since we pay the penalty of $\Dalg$, it's crucial to ensure that paying the penalty for each set in $\Dalg$ leads to a good solution, whether from the recursive call or the initial \SSW{} algorithm. We partition sets in $\Dalg$ into $\Dpaid$ and $\Dunpaid$ based on whether the optimal solution pays their penalty or connects them, respectively. Sets in $\Dunpaid$ are further divided into $\Dsingle$ and $\Dmult$ based on whether they cut exactly one edge or multiple edges of the optimal solution, respectively. Now, let's focus on each class separately.

\begin{itemize}
    \item For sets in $\Dpaid$, since the optimal solution pays their penalty, it's logical for our algorithm to do the same.
    \item  Regarding sets in $\Dmult$, since they cut more than one edge of the optimal solution, a better lower-bound for the optimal solution can be found in the analysis of the \SSW, resulting in an improved approximation from \SSW's output.
    \item As for sets in $\Dsingle$, which only cut one edge of the optimal solution, we demonstrate that removing these edges from the optimal solution does not increase its penalty in the recursive instance where the penalty of $\Dalg$ has already been paid.
    Thus, a solution for the recursive instance can be achieved by removing these edges from the initial instance's optimal solution, resulting in a significantly reduced cost due to the removal of edges.
    By leveraging the induction hypothesis that the recursive call provides a 2-approximate solution for its instance, we conclude that the recursive call yields a good solution for the initial instance.
\end{itemize}




\subsection{Preliminaries}
Here, we introduce some notation and elementary lemmas used in the subsequent sections. 

\paragraph{Penalty Function Properties.}
\begin{definition}
    For a family $\family \subseteq 2^\V$, we use $\closure(\family)$ to denote the smallest subset of $2^\V$ that contains $\family$ and is closed under union and complement operations. 
    That is, for any sets $S_1, S_2 \in \closure(\family)$, $S_1\cup S_2$ and $\comp{{S_1}}$ are also in $\closure(\family)$. 
    It is worth noting that $\closure(\family)$ is closed under intersection and set difference operations as well, since these operations can be written as a combination of union and complement operations.
\end{definition}

Unless otherwise specified, when a penalty function $\pi$ is used hereinafter, we assume the function satisfies the required properties: empty set property, submodularity, monotonicity, union property, and complementarity. 

\begin{lemma} \label{lm:closure}
    For any family $\family$, sets $S_1, S_2 \in \family$, and family $\family'$ with $\pi(\family')=0$ we have 
    $$\pi(\family)=\pi(\family\cup\{S_1\cup S_2\})=\pi(\family\cup\family') = \pi(\family\cup \{S_1^C\}).$$ Consequently, we have $\pi(\family)=\pi(\closure(\family))$.
\end{lemma}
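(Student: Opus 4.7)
}

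The plan is to establish each of the three equalities by combining submodularity with monotonicity and the appropriate base identity (union property, complementarity, or $\pi(\family')=0$), and then bootstrap the closure claim by induction on the construction of $\closure(\family)$.

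First, I will prove $\pi(\family) = \pi(\family \cup \{S_1 \cup S_2\})$. The base identity is the union property, which gives $\pi(\{S_1, S_2\}) = \pi(\{S_1, S_2, S_1\cup S_2\})$. Applying submodularity to the families $\family$ and $\{S_1, S_2, S_1 \cup S_2\}$, whose intersection contains $\{S_1, S_2\}$ and whose union is $\family \cup \{S_1 \cup S_2\}$, I get $\pi(\family) + \pi(\{S_1,S_2,S_1\cup S_2\}) \ge \pi(\family \cup \{S_1 \cup S_2\}) + \pi(\{S_1, S_2\})$. Substituting the union-property identity cancels the relevant terms and yields $\pi(\family) \ge \pi(\family \cup \{S_1 \cup S_2\})$; monotonicity gives the reverse inequality. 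The same template proves $\pi(\family) = \pi(\family \cup \{S_1^C\})$, now using complementarity $\pi(\{S_1\}) = \pi(\{S_1, S_1^C\})$ as the base identity with the families $\family$ and $\{S_1, S_1^C\}$.

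Next, for $\pi(\family) = \pi(\family \cup \family')$ when $\pi(\family')=0$, I will apply submodularity directly to $\family$ and $\family'$: $\pi(\family) + \pi(\family') \ge \pi(\family \cup \family') + \pi(\family \cap \family')$. Since $\pi(\family') = 0$ and $\pi(\family \cap \family') \ge 0$ by monotonicity (and the empty set property, in case the intersection is empty), this collapses to $\pi(\family) \ge \pi(\family \cup \family')$, and monotonicity again supplies the reverse direction.

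Finally, for $\pi(\family) = \pi(\closure(\family))$, I will argue by induction on the length of the sequence of union/complement operations used to build $\closure(\family)$ from $\family$. The previous two equalities show that appending either $S_1 \cup S_2$ or $S_1^C$ for sets already present does not change the penalty, and $\closure(\family)$ is exactly the result of iterating these two operations to a fixed point. I expect no real obstacle here — the main subtlety is just being careful that in the submodularity step the intersection of the two families genuinely contains the sets needed to invoke the base identity (for instance, $\{S_1, S_2\} \subseteq \family \cap \{S_1, S_2, S_1 \cup S_2\}$), and that monotonicity is applied in the correct direction on families rather than on individual sets.
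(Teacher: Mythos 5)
Your proposal is correct and follows essentially the same route as the paper: for each equality, apply submodularity to the ambient family and a small auxiliary family, then invoke the relevant base identity (union property, complementarity, or $\pi(\family')=0$) to cancel, and close with monotonicity in the reverse direction; the closure claim then follows by iterating. The only cosmetic difference is that the paper applies submodularity with $\family \setminus \{S_1 \cup S_2\}$ as the first argument (so the intersection is exactly $\{S_1, S_2\}$ with no case distinction), whereas you use $\family$ directly, which works once you note that either $S_1 \cup S_2 \notin \family$ (so the intersection is exactly $\{S_1,S_2\}$) or the statement is trivial.
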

\begin{proof}
    These statements result from the required properties of $\pi$. For example, we have
    \begin{align*}
        \pi(\family \cup \{S_1\cup S_2\}) &\leq \pi(\family \setminus \{S_1\cup S_2\}) + \pi(\{S_1,S_2,S_1\cup S_2\}) - \pi(\{S_1, S_2\}) \tag{Submodularity}\\
        &=\pi(\family \setminus \{S_1\cup S_2\}) \tag{Union property}\\
        &\leq \pi(\family). \tag{Monotonicity}
    \end{align*}
    By monotonicity, we also have $\pi(\family) \leq \pi(\family \cup \{S_1\cup S_2\})$, so $\pi(\family) = \pi(\family \cup \{S_1\cup S_2\})$. The other statements can be proven similarly.
\end{proof}

\begin{definition} \label{def:marg}
We use the following notation to denote the marginal cost of the penalty function $\pi$ in relation to a given family $\family_2$:
$$
\pi(\family_1\mid \family_2) \coloneqq \pi(\family_1 \cup \family_2) - \pi(\family_2).
$$
\end{definition}

We can show that this marginal cost function retains the required properties of the original cost function $\pi$.

\begin{lemma} \label{lm:margsub}
    If $\pi$ satisfies the required properties (empty set property, submodularity, monotonicity, union property, and complementarity), then for any fixed family $\family$, the function $\pi(\ \cdot\mid\family)$ will also satisfy these properties. 
\end{lemma}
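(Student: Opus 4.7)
The plan is to verify the five required properties for $\pi(\,\cdot\mid\family)$ one at a time, in each case unfolding the definition $\pi(\family_1\mid\family)=\pi(\family_1\cup\family)-\pi(\family)$ and reducing the desired statement to a corresponding statement about $\pi$ applied to unions with $\family$. The constant additive shift by $-\pi(\family)$ will cancel cleanly for the additive and comparison-based properties, and Lemma~\ref{lm:closure} will handle the remaining two.

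First, the three ``arithmetic'' properties. For the empty set property, substituting $\family_1=\emptyset$ gives $\pi(\emptyset\mid\family)=\pi(\family)-\pi(\family)=0$. For monotonicity, if $\family_1\subseteq\family_2$ then $\family_1\cup\family\subseteq\family_2\cup\family$, so monotonicity of $\pi$ transfers directly after cancelling $\pi(\family)$. For submodularity, the key observation is the set-theoretic identity $(\family_1\cap\family_2)\cup\family=(\family_1\cup\family)\cap(\family_2\cup\family)$ together with $(\family_1\cup\family_2)\cup\family=(\family_1\cup\family)\cup(\family_2\cup\family)$; applying submodularity of $\pi$ to the pair $\family_1\cup\family,\,\family_2\cup\family$ and then subtracting $2\pi(\family)$ from both sides yields exactly the submodularity inequality for $\pi(\,\cdot\mid\family)$.

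The remaining two properties require Lemma~\ref{lm:closure}. For the union property, we must show $\pi(\{S_1,S_2\}\mid\family)=\pi(\{S_1,S_2,S_1\cup S_2\}\mid\family)$; after cancelling $\pi(\family)$ this reduces to $\pi(\{S_1,S_2\}\cup\family)=\pi(\{S_1,S_2,S_1\cup S_2\}\cup\family)$, which is an instance of Lemma~\ref{lm:closure} applied to the family $\{S_1,S_2\}\cup\family$ (which contains both $S_1$ and $S_2$). Complementarity follows similarly: $\pi(\{S\}\mid\family)=\pi(\{S,\comp{S}\}\mid\family)$ reduces to $\pi(\{S\}\cup\family)=\pi(\{S,\comp{S}\}\cup\family)$, which is again a direct application of the complement closure part of Lemma~\ref{lm:closure}.

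I do not anticipate any real obstacle here, since every step is essentially a one-line invocation of either the corresponding property of $\pi$ or Lemma~\ref{lm:closure}. If anything warrants care, it is the submodularity step, where one must notice the distributive identity $(\family_1\cap\family_2)\cup\family=(\family_1\cup\family)\cap(\family_2\cup\family)$ so that applying submodularity of $\pi$ to the shifted families produces precisely the term $\pi((\family_1\cap\family_2)\cup\family)$ rather than something weaker.
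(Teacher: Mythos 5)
Your proposal is correct and follows essentially the same approach as the paper: verify the five properties one at a time, cancel the constant $\pi(\family)$, reduce to the corresponding property of $\pi$ via the distributive identity $(\family_1\cup\family)\cap(\family_2\cup\family)=(\family_1\cap\family_2)\cup\family$ for submodularity, and invoke Lemma~\ref{lm:closure} on the shifted family for the union property and complementarity.
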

\begin{proof}
We briefly show how each property is satisfied.
\begin{itemize}
    \item Empty set property: $\pi(\emptyset\mid\family)=\pi(\emptyset\cup\family)-\pi(\family)=\pi(\family)-\pi(\family)=0$.
    \item Submodularity: For any $\family_1,\family_2 \subseteq 2^\V$ we have
    \begin{align*}
    \pi(\family_1\mid\family) + 
    \pi(\family_2\mid\family) &= 
    \pi(\family_1\cup\family) + 
    \pi(\family_2\cup\family) - 2\pi(\family)\\
    &\geq  \pi(\family_1\cup\family_2\cup\family) + 
    \pi((\family_1\cap\family_2)\cup\family) - 2\pi(\family) \tag{By submodularity of $\pi$}\\
    &\geq \pi(\family_1\cup\family_2\mid\family) + \pi(\family_1\cap\family_2\mid\family).
    \end{align*}
    \item Monotonicity: For any $\family_1\subseteq\family_2\subseteq2^\V$, we have
    $$\pi(\family_1\mid\family)=\pi(\family_1\cup\family)-\pi(\family)\leq\pi(\family_2\cup\family)-\pi(\family)=\pi(\family_2\mid\family).$$
    \item Union property: For any two sets $S_1,S_2 \subseteq \V$, we have
    \begin{align*}
    \pi(\{S_1, S_2\}\mid\family)&=\pi(\{S_1, S_2\}\cup\family)-\pi(\family)\\
    &=\pi(\{S_1, S_2\}\cup\{S_1\cup S_2\}\cup\family)-\pi(\family) \tag{Lemma \ref{lm:closure}}\\
    &=\pi(\{S_1, S_2, S_1\cup S_2\}\mid\family)
    \end{align*}
    \item Complementarity: For any set $S \subseteq \V$, we have
    \begin{align*}
    \pi(\{S\}\mid\family)&=\pi(\{S\}\cup\family)-\pi(\family)\\
    &=\pi(\{S\}\cup\{\comp{S}\}\cup\family)-\pi(\family) \tag{Lemma \ref{lm:closure}}\\
    &=\pi(\{S,\comp{S}\}\mid\family)
    \end{align*}
\end{itemize}
\end{proof}

Next, we show that the marginal cost of a marginal cost of a penalty function can be written as one marginal cost.

\begin{lemma} \label{lm:union_marginal_cost}
    For penalty functions $\pi, \pi', \pi''$, and families $\family_1, \family_2$, such that for any family $\family$, we have $\pi'(\family) = \pi(\family \mid \family_1)$, and $\pi''(\family) = \pi'(\family \mid \family_2)$, we can conclude that
    $$\pi''(\family) = \pi(\family \mid \family_1 \cup \family_2)\text{.}$$
\end{lemma}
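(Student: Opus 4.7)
The plan is to prove the identity by a direct expansion of the definition of marginal cost given in Definition~\ref{def:marg}, which yields a telescoping cancellation. Since this is purely an algebraic identity between penalty values rather than a structural claim about submodular functions, I do not expect any serious obstacle; the key is to unroll each layer of marginalization carefully and track which baseline family appears where.

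Concretely, I would start from the outermost definition and write
$$\pi''(\family) = \pi'(\family \mid \family_2) = \pi'(\family \cup \family_2) - \pi'(\family_2).$$
Next, I would rewrite each occurrence of $\pi'$ using $\pi'(\family) = \pi(\family \mid \family_1) = \pi(\family \cup \family_1) - \pi(\family_1)$, applied to the families $\family \cup \family_2$ and $\family_2$ respectively. This gives
$$\pi''(\family) = \bigl[\pi(\family \cup \family_2 \cup \family_1) - \pi(\family_1)\bigr] - \bigl[\pi(\family_2 \cup \family_1) - \pi(\family_1)\bigr].$$
The two $\pi(\family_1)$ terms cancel, leaving $\pi(\family \cup \family_1 \cup \family_2) - \pi(\family_1 \cup \family_2)$, which is exactly $\pi(\family \mid \family_1 \cup \family_2)$ by Definition~\ref{def:marg}.

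The main thing to be careful about is simply notational: ensuring that when I substitute $\pi'$ into the expressions $\pi'(\family \cup \family_2)$ and $\pi'(\family_2)$, the argument of $\pi'$ is taken as a single family rather than being confused with the baseline $\family_1$. Once that is handled, no appeal to submodularity, monotonicity, or the union/complementarity properties is needed, so the proof is essentially one chain of equalities and the argument is complete.
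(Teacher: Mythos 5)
Your proof is correct and matches the paper's argument exactly: both unroll Definition~\ref{def:marg} twice and cancel the $\pi(\family_1)$ terms to reach $\pi(\family \mid \family_1 \cup \family_2)$. (As a side note, your version cleanly writes $\pi'(\family_2) = \pi(\family_2 \mid \family_1)$, whereas the paper's third displayed line contains a small typo, writing $\pi(\family\mid\family_1)$ in place of $\pi(\family_2\mid\family_1)$, though the subsequent line is correct.)
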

\begin{proof}
    By definition, we have
    \begin{align*}
        \pi''(\family) &= \pi'(\family \mid \family_2)\\
        &=\pi'(\family\cup\family_2) - \pi'(\family_2)\\
        &=\pi(\family\cup\family_2\mid \family_1) - \pi(\family\mid\family_1)\\
        &=\left[\pi(\family\cup\family_1\cup\family_2) - \pi(\family_1)\right] - 
        \left[\pi(\family_1\cup\family_2) - \pi(\family_1)\right]\\
        &=\pi(\family\cup\family_1\cup\family_2) - \pi(\family_1\cup\family_2)\\
        &=\pi(\family\mid\family_1\cup\family_2).
    \end{align*}
\end{proof}

We also use the following well-known properties of submodular and montone submodular functions.

\begin{lemma} \label{lm:subadd}
    Any non-negative submodular function $\pi$ is also subadditive. That is, for any two sets $A$ and $B$, $$\pi(A\cup B) \leq \pi(A) + \pi(B).$$ 
    In particular, for any set $S$, we have $$
    \pi(S) \leq \sum_{s\in S} \pi(\{s\}).$$ 
\end{lemma}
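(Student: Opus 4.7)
The plan is to prove the two claims in sequence, using only submodularity and non-negativity of $\pi$, since both are standard consequences once written out in the appropriate order.

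For the first claim, I would start from the submodular inequality applied to $A$ and $B$, namely $\pi(A) + \pi(B) \geq \pi(A \cup B) + \pi(A \cap B)$. Since $\pi$ is non-negative, $\pi(A \cap B) \geq 0$, so we can drop this term from the right-hand side to obtain $\pi(A) + \pi(B) \geq \pi(A \cup B)$, which is exactly subadditivity. Note that when $\pi$ is a penalty function in the paper's sense (defined on families of subsets), the empty set property $\pi(\emptyset) = 0$ together with monotonicity already gives non-negativity for free, so the hypothesis is automatic in the settings where we apply the lemma.

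For the second claim, I would proceed by induction on $|S|$. The base case $|S| \leq 1$ is immediate. For the inductive step, fix any $s_0 \in S$ and write $S = \{s_0\} \cup (S \setminus \{s_0\})$. Applying the subadditivity just established gives $\pi(S) \leq \pi(\{s_0\}) + \pi(S \setminus \{s_0\})$, and the inductive hypothesis applied to $S \setminus \{s_0\}$ yields $\pi(S \setminus \{s_0\}) \leq \sum_{s \in S \setminus \{s_0\}} \pi(\{s\})$. Combining these two bounds produces $\pi(S) \leq \sum_{s \in S} \pi(\{s\})$, completing the induction.

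There is no real obstacle in this proof; the entire content is the observation that one may discard the $\pi(A \cap B)$ term in the submodular inequality because $\pi$ is non-negative. The only small point worth being careful about is the interpretation of the ``elements'' $s \in S$: in the setting of the paper each such $s$ is itself a subset of $\V$, and $\pi(\{s\})$ denotes the penalty of the singleton family containing that set, so the statement and induction remain syntactically well-formed within the penalty-function framework of the earlier lemmas.
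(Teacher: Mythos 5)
Your proof is correct and matches the paper's argument: both derive subadditivity by applying the submodular inequality to $A$ and $B$ and dropping the non-negative term $\pi(A\cap B)$. The paper leaves the second claim implicit, whereas you spell out the induction, but the substance is the same.
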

\begin{proof}
    This follows from the definition of submodularity for any non-negative submodular function:
    \begin{align*}
    \pi(A\cup B) &\leq \pi(A) + \pi(B) - \pi(A\cap B) \tag{Submodularity}\\
    & \leq \pi(A) + \pi(B). \tag{Non-negativity}
    \end{align*}
\end{proof}

\begin{lemma} \label{lm:submod}
    If $\pi$ is a monotone submodular function, then for any three sets $S$, $A$, and $B$ such that $A \subseteq B$, we have $$\pi(S\mid B) \leq \pi(S\mid A).$$
\end{lemma}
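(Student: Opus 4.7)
The plan is to derive the diminishing-returns inequality from the standard submodular inequality applied to a well-chosen pair of families, combined with monotonicity to clean up one term. Unfolding the marginal-cost notation in Definition~\ref{def:marg}, the target inequality $\pi(S\mid B)\le\pi(S\mid A)$ is equivalent to
\[
\pi(S\cup B) + \pi(A) \le \pi(S\cup A) + \pi(B),
\]
so the goal is to produce this rearranged inequality.

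The natural choice is to apply submodularity of $\pi$ to the two families $S\cup A$ and $B$, which yields
\[
\pi\bigl((S\cup A)\cup B\bigr) + \pi\bigl((S\cup A)\cap B\bigr) \le \pi(S\cup A) + \pi(B).
\]
Using $A\subseteq B$, I would simplify the left-hand side: the union becomes $S\cup B$, and the intersection equals $A\cup(S\cap B)$, which in particular contains $A$. Then I would invoke monotonicity of $\pi$ to replace $\pi(A\cup(S\cap B))$ on the left by the smaller $\pi(A)$, giving
\[
\pi(S\cup B)+\pi(A) \le \pi(S\cup A)+\pi(B),
\]
and rearranging into marginal form completes the proof.

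There is no real obstacle here; the only mild subtlety is keeping track of which direction of each inequality is needed so that the monotonicity step points the right way (we need to lower-bound the left-hand side by $\pi(A)$, not upper-bound it). The result is essentially the well-known fact that monotone submodular set functions exhibit diminishing marginal returns, now phrased for set functions on $2^{2^{\V}}$ rather than on $2^{\V}$; no property of the ground type is used beyond the submodular inequality and monotonicity, both of which hold in this setting by hypothesis.
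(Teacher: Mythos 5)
Your proof is correct and follows essentially the same route as the paper's: submodularity applied to the pair $S\cup A$ and $B$, simplification using $A\subseteq B$, and monotonicity to reduce $\pi(A\cup(S\cap B))$ to $\pi(A)$. The only cosmetic difference is that you rearrange the target inequality up front and then derive it, whereas the paper works forward from $\pi(S\mid B)$, but the algebra is identical.
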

\begin{proof}
    We use the definition of submodularity with sets $S\cup A$ and $B$ as follows:
    \begin{align*}
        \pi(S\cup A \cup B)  + \pi((S\cup A) \cap B) \leq \pi(S\cup A) + \pi(B).
    \end{align*}
    Rearranging the terms, we get 
    $$
    \pi(S\cup A \cup B) - \pi(B)\leq \pi(S\cup A) - \pi((S\cup A) \cap B),$$
    which we can use to complete the proof:
    \begin{align*}
            \pi(S\mid B) &=
            \pi(S \cup B) - \pi(B)\\
            &=\pi(S\cup A \cup B) - \pi(B) \tag{$A\subseteq B$}\\ &\leq \pi(S\cup A) - \pi((S\cup A) \cap B)\\
            &= \pi(S\cup A) - \pi((S\cap B) \cup A) \tag{$A \subseteq B$}\\
            &\leq \pi(S\cup A) - \pi(A) \tag{Monotonicity}\\
            &=\pi(S\mid A).
    \end{align*}
\end{proof}

\paragraph{Graph Notations.}
For a set $S \subseteq \V$, we use $\delta(S)$ to denote the set of edges cut by $S$, i.e., edges with exactly one endpoint in $S$. 
We also say that an edge $e\in \E$ and set $S \subseteq \V$ are adjacent if and only if $e\in \deltaS$.
For a forest $\F$, we use $\cc(\F)$ to denote the connected components of $\F$. For a subgraph $H$ of graph $\G$, we use $\ce(H)$ to denote the total weight $\sum_{e\in H} \ce(e)$ of this subgraph.

\begin{definition}
\label{def:pen}
    For a forest $\F$, we use 
    $$
    \pen(\F)\coloneqq \{S \subseteq \V \mid \lvert\delta(S)\cap\F\rvert=0\}$$
    to denote the family for which the penalty $\pi$ needs to be paid.
\end{definition}
\begin{lemma} \label{lm:compcl}
    For any forest $\F$, we have 
    $$\pen(\F)=\closure(\cc(\F)).$$
\end{lemma}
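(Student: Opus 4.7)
The plan is to prove the two inclusions $\closure(\cc(\F))\subseteq\pen(\F)$ and $\pen(\F)\subseteq\closure(\cc(\F))$ separately, using the closure properties of $\pen(\F)$ for the first direction and the component-structure of forests for the second.

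For $\closure(\cc(\F))\subseteq\pen(\F)$, I would first verify that each connected component $C\in\cc(\F)$ lies in $\pen(\F)$: every edge of $\F$ has both endpoints inside the same component, so no edge of $\F$ is cut by $C$. Next I would show that $\pen(\F)$ is itself closed under union and complement. Closure under complement is immediate since $\delta(S)=\delta(\comp{S})$. For union, I would use the standard inclusion $\delta(S_1\cup S_2)\subseteq\delta(S_1)\cup\delta(S_2)$ (checked by a quick case analysis on which side of $S_1\cup S_2$ each endpoint lies), so if neither $S_1$ nor $S_2$ cuts any edge of $\F$, neither does $S_1\cup S_2$. Since $\pen(\F)$ contains $\cc(\F)$ and is closed under union and complement, it must contain $\closure(\cc(\F))$, which is the smallest such family.

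For the reverse inclusion, I would take an arbitrary $S\in\pen(\F)$ and argue that $S$ is a union of connected components of $\F$. Indeed, if some $C\in\cc(\F)$ satisfied both $C\cap S\neq\emptyset$ and $C\setminus S\neq\emptyset$, then by connectivity of $C$ in $\F$ there would be a path in $\F$ between a vertex in $C\cap S$ and a vertex in $C\setminus S$, and at least one edge of this path would be cut by $S$, contradicting $S\in\pen(\F)$. Hence each component is either entirely inside $S$ or entirely outside, and $S=\bigcup_{C\in\cc(\F),\ C\subseteq S}C$. Being a union of elements of $\cc(\F)$ (taking the empty union to yield $\emptyset$, which also equals $\comp{\V}$ in the closure), $S$ belongs to $\closure(\cc(\F))$.

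The argument is almost entirely routine set manipulation; the only spot that needs mild care is making sure that $\cc(\F)$ is understood to include singleton components for isolated vertices, so that the partition covers all of $\V$ and the ``union of components'' representation of $S$ above is well-defined. No real obstacle beyond this bookkeeping is expected.
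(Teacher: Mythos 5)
Your proposal is correct and follows essentially the same two-inclusion strategy as the paper's proof: for one direction you show $\cc(\F)\subseteq\pen(\F)$ and that $\pen(\F)$ is closed under union and complement, and for the other you argue that any $S\in\pen(\F)$ must contain or avoid each component entirely and is therefore a union of components. Your extra bookkeeping about isolated vertices and the empty union is a harmless refinement rather than a change of approach.
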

\begin{proof}
    Consider any set $S$ in $\pen(\F)$. This set must not cut any edge in $\F$. Therefore, for each component $C$ in $\F$, $S$ must either completely contain $C$ or have no intersection with $C$ as otherwise, $S$ would cut an edge in $\F$. This means that $S$ can be written as a union of sets in $\cc(\F)$ and therefore $\pen(\F) \subseteq \closure(\cc(\F))$. On the other hand, any connected component of $\F$ cuts no edge in $\F$, so $\cc(\F)\subseteq \pen(\F)$. Additionally, it is easy to see that $\pen(\F)$ is closed under union and complement operations: if a set cuts no edge in $\F$, neither will its complement, and if two sets cut no edges in $\F$, neither will their union. Therefore, $\closure(\cc(\F)) \subseteq \pen(\F)$. 
\end{proof}

For an instance $\I=(\G=(\V,\E,\ce),\pi)$ of the PCF problem and a solution $\Sol$ using forest $\F_\Sol$, we use 
$\cost_\I(\Sol)=\ce(\F_\Sol)+\pi(\pen(\F_\Sol))$ to denote the cost of this solution.

\section{SSW: 3-Approximation Algorithm}
\label{sec:3-apx-alg}

In this section, we present the algorithm of~\cite{DBLP:conf/soda/SharmaSW07} for the sake of completeness. 
Our presentation remains faithful to their original work, with the key difference being the omission of the function $f(S)$. 
This function determines whether the vertices in set $S$ need to be connected to outer vertices $(\V \setminus S)$.
Following~\cite{DBLP:conf/soda/SharmaSW07}, we assume that any set $S$ with $f(S)=0$ has a zero marginal cost to the penalty of any family (refer to their inactivity property). 
Here, we argue that $f$ can be entirely removed. 
Similar to their approach, we assume that if a set $S$ does not require connections to $\V \setminus S$, its contribution to the penalty of any family is zero.
This simplified model retains the generality of their original model, as any instance compatible with their $f(S)$ function can be readily defined within our framework.

We now proceed to explain their algorithm without relying on $f$. 
The detailed pseudocode can be found in Algorithm~\ref{alg:pcf3}.
The algorithm operates on a graph where each edge $e \in \E$ is assumed to be a curve with length $\ce(e)$.
It begins with an empty forest, $\F$, and a set containing all connected components of $\F$, denoted by $\currentsets$. 
Additionally, it defines $\yS$ to track the total time each set $S \subseteq \V$ has been active throughout the algorithm's execution.

Throughout the algorithm, we maintain a subset of the connected components called active sets, denoted by $\activesets$. 
Initially, the active sets are identical to the set of all connected components, denoted by $\currentsets$. 
These active sets play a key role in the algorithm.
At each step, an active set colors the uncolored portions of its adjacent edges. 
This coloring process continues until an edge, $e \in \E$, becomes fully colored (meaning the sum of the activation times of all sets adjacent to it, $\sum_{S: e \in \deltaS} \yS$, equals the edge's length, $\ce(e)$). 
We use the procedure \finddeltae, which is explained in more detail in Section~\ref{sec:SSW_suprocedures}, to determine the earliest time when a new edge becomes fully colored.

When an edge is fully colored, the algorithm updates the forest and active sets simultaneously. 
If the endpoints belong to different components in the current forest, $\F$, the edge is added to $\F$. 
Additionally, the connected components containing both endpoints are merged in both $\currentsets$ (all connected components) and $\activesets$ (active sets). 
This merging ensures that the edge becomes part of a single connected component within the forest, preventing it from being adjacent to any active set again. 
Consequently, the sum of activation times for that edge, $\sum_{S: e \in \deltaS} \yS$, will not increase further. 
This observation leads to the following corollary.

\begin{corollary}
\label{corollary:edge_constraint}
    For any edge $e \in \E$, in the \SSW{} procedure, we have $\sum_{S: e \in \deltaS} \yS \leq \ce(e)$.
\end{corollary}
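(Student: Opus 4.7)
The plan is to verify the inequality by tracking, for each edge $e=(u,v)$, how the sum $\sum_{S:e\in\delta(S)}\yS$ evolves over the course of \SSW. The proof will be by a simple invariant argument: this sum can only grow while at least one active set is adjacent to $e$, and the moment it first reaches $\ce(e)$ is precisely the moment the two endpoints of $e$ get placed into a common connected component of $\currentsets$, after which no active set can ever be adjacent to $e$ again.

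First, I would invoke the role of \finddeltae: at each iteration the algorithm advances time by exactly the earliest duration $\Delta$ at which some previously uncolored edge would reach full length. Consequently, during any interval between two events, the contribution $\sum_{S:e\in\delta(S)}\yS$ accumulated on any particular edge $e$ is at most $\ce(e)$, since otherwise \finddeltae would have returned an earlier event on $e$ and cut the interval short. This rules out overshooting within a single time step.

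Next, I would examine what happens when $e$ first becomes fully colored, say at some time $t$. By the algorithm's edge-event handling, the components of $\currentsets$ containing $u$ and $v$ are merged into a single component $C$ (or were already equal). Because the algorithm only ever merges components in $\currentsets$ and never splits them, $u$ and $v$ remain in the same component of $\currentsets$ forever after. Since $\activesets\subseteq\currentsets$ at every moment, any active set $S$ occurring after time $t$ either contains both $u$ and $v$ or neither, so $e\notin\delta(S)$. Thus the sum $\sum_{S:e\in\delta(S)}\yS$ does not grow past $\ce(e)$, and the inequality is in fact an equality for all edges that ever become tight and a strict inequality for the rest.

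The main obstacle, such as it is, is formalizing the monotone-merging invariant on $\currentsets$ and relating it to $\activesets$: once stated, everything else is essentially bookkeeping. One should be careful to observe that tight-family events only deactivate sets (removing them from $\activesets$) but never alter $\currentsets$, so no split occurs on that branch either; this is what ensures the ``same component forever'' property used in the previous paragraph.
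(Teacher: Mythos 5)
Your proposal is correct and follows essentially the same reasoning the paper uses: once an edge is fully colored the algorithm merges its endpoints in $\currentsets$, and since $\activesets\subseteq\currentsets$, no active set can cut that edge again, while $\finddeltae$ guarantees the sum cannot overshoot within a step. The paper states this argument informally in the prose preceding the corollary rather than as a stand-alone proof; your write-up just makes the monotone-merging invariant and the no-overshoot step explicit.
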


While the core algorithm resembles the general approach of Goemans and Williamson~\cite{DBLP:journals/siamcomp/GoemansW95}, there is an additional constraint we need to consider. 
For any family $\family \subseteq 2^{\V}$, the algorithm ensures that the combined activation time of all sets within a family, $\sum_{S \in \family} \yS$, never exceeds the penalty of the family, $\pi(\family)$. 

To enforce this constraint, the algorithm utilizes two procedures.
First, it calls \finddeltaf{} to identify the earliest point in time when a family containing at least one active set becomes tight, meaning its total activation time reaches the penalty of the family.
Then, it uses \findtightfamily{} to find such a family and deactivate all sets in the family.
More details on these procedures can be found in Section~\ref{sec:SSW_suprocedures}.
By using these procedures, the algorithm ensures that no active set is part of a tight family, establishing the following corollary.

\begin{corollary}
\label{corollary:family_constraint}
    For any family  $\family \subseteq 2^{\V}$, in the \SSW{} procedure, we have $\sum_{S \in \family} \yS \leq \pi(\family)$.
\end{corollary}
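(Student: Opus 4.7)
The plan is to prove the inequality as a loop invariant maintained throughout the execution of \SSW{}, using induction on the events (edge-becoming-tight events and family-becoming-tight events). At initialization, every $\yS$ equals $0$, and the empty set property combined with monotonicity yields $\pi(\family)\geq 0$, so $\sum_{S\in \family}\yS = 0 \leq \pi(\family)$ for every $\family$. Thus the base case is immediate.

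For the inductive step, observe that between two consecutive events only the $\yS$ of currently active sets grow (continuously in time, at a uniform rate), while the $\yS$ of inactive sets are frozen. Consequently, for a fixed family $\family$, the quantity $\sum_{S\in\family}\yS$ is strictly increasing at time $t$ if and only if $\family$ contains at least one active set at that time. Therefore, to maintain the invariant, it is enough to ensure that no family that currently contains an active set is ever strictly beyond tightness. Suppose, toward contradiction, that $t^\star$ is the first time at which the inequality would be violated for some family $\family$. Just before $t^\star$, $\sum_{S\in\family}\yS \leq \pi(\family)$; by continuity, at $t^\star$ we have equality, so $\family$ is tight and still contains at least one active set. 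But this contradicts the design of the algorithm: \finddeltaf{} is specified to compute exactly the earliest moment at which some family containing an active set becomes tight, and at that moment \findtightfamily{} deactivates all sets of such a family, halting the growth of its corresponding sum.

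One subtlety that must be handled carefully is simultaneous tightness: at the time returned by \finddeltaf{}, several distinct families might become tight at once, and deactivating the sets of one of them may leave another tight family still containing active sets. The plan is to argue that the algorithm iterates this deactivation step (either explicitly inside \findtightfamily{} or implicitly by repeated calls to \finddeltaf{} returning $0$) until no tight family with an active set remains; since each iteration strictly decreases the number of active sets, this terminates in finitely many rounds. Throughout all of these intra-moment rounds, no $\yS$ changes, so the inequality cannot be violated for any family during them. Only after the algorithm has reached a state in which no tight family contains an active set is time allowed to advance, preserving the invariant.

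The main obstacle, and where the argument relies on external machinery, is trusting the specification of \finddeltaf{} and \findtightfamily{} given in Section~\ref{sec:SSW_suprocedures}: specifically, that \finddeltaf{} correctly identifies the smallest time increment before any family containing an active set becomes tight, and that \findtightfamily{} always returns such a family when one exists. Once these black-box guarantees are taken for granted, the invariant argument above closes the loop and directly yields $\sum_{S\in\family}\yS \leq \pi(\family)$ for every $\family \subseteq 2^{\V}$ throughout the run of \SSW{}, which is exactly the corollary.
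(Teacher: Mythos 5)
Your proposal is correct and matches the paper's reasoning: the paper states this as a corollary of the algorithm's design (namely that \finddeltaf{} picks the earliest time any family with an active set would become tight, and \findtightfamily{} then deactivates such a family), without giving a formal proof. Your loop-invariant formalization, including the observation that only active sets' $\yS$ values grow and the handling of simultaneous tightness via repeated zero-increment iterations of the while loop, is a careful rendering of exactly this argument.
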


Finally, when there are no active sets remaining, the algorithm starts the pruning phase. 
Let $\Dm$ be the union of all tight families found during the algorithm.
In the pruning phase, while there is a set $S \in \Dm$ which cuts the forest in exactly one edge, that edge will be removed from the forest.
This process continues until there are no such sets.
The remaining forest $\F$ will be returned as the solution, and the penalty of $\pen(\F)$ will be paid. 
We also return $\Dm$ since our 2-approximation algorithm will use it in the next section.

\begin{algorithm}[ht]
  \caption{A 3-approximation Algorithm}
  \label{alg:pcf3}
  \hspace*{\algorithmicindent} \textbf{Input:} An undirected graph $\G=(\V, \E, \ce)$ with edge costs $\ce: \E \rightarrow \mathbb{R}_{\ge 0}$ and penalties $\pi : 2^{2^{\V}} \rightarrow \mathbb{R}_{\ge 0}$. \\
  \hspace*{\algorithmicindent} \textbf{Output:} A forest $\F$ and a family $\Dm \subseteq 2^{\V}$.
  \begin{algorithmic}[1]
    \Procedure{\SSW}{$I=(\G, \pi)$}
      \State Initialize $\F, \Dm \gets \emptyset$
      \State Initialize $\currentsets, \activesets \gets \{\{v\}: v \in \V\}$
      \State Implicitly set $\yS \gets 0$ for all $S \subset \V$
      \While{$\activesets \neq \emptyset$} 
      \label{line:while-loop-for-coloring}
        \State $\epse \gets \finddeltae(\G, \y, \activesets, \currentsets) $ 
        \State $\epsf \gets \finddeltaf(\G, \pi, \y, \activesets) $ 
        \State $\epst \gets \min(\epse, \epsf)$
        \For{$S \in \activesets$}
            \State $\yS \gets \yS + \epst$ 
        \EndFor
        \For{$e=\{u,v\}\in E$} 
          \State Let $S_u, S_v \in \currentsets$ be sets that contains each endpoint of $e$
          \If{$\sum_{S: e \in \deltaS} \yS = \ce(e)$ \textbf{and} $S_u \neq S_v$} 
            \State $\F \gets \F \cup \{e\}$
            \State $\currentsets \gets (\currentsets \setminus \{S_u, S_v\}) \cup \{S_u \cup S_v\}$ \label{line:add_to_currentsets}
            \State $\activesets \gets (\activesets \setminus \{S_u, S_v\}) \cup \{S_u \cup S_v\}$ \label{line:add_to_activesets}
          \EndIf
        \EndFor
        \State $\family \gets \findtightfamily(\G, \pi, \y, \activesets)$
        \If{$\family \neq \emptyset$}
          \State $\activesets \gets \activesets \setminus \family$ \label{line:deactivate_tight_family}
          \State $\Dm \gets \Dm \cup \family$ \label{line:insert_to_Dm}
        \EndIf
      \EndWhile
      \While{$\exists S\in \Dm, e \in \E: \deltaS \cap \F = \{e\}$} \label{line:while_loop_for_removing_edges}
        \State $\F \gets \F \setminus \{e\}$ \label{line:remove_edge}
      \EndWhile
      \State \Return $(\F, \Dm)$
    \EndProcedure
  \end{algorithmic}
\end{algorithm}


\subsection{Sub-procedures Used in \SSW{}}
\label{sec:SSW_suprocedures}

In this section, we explore the procedures \finddeltae, \finddeltaf, and \findtightfamily{} in more detail.

\paragraph{\finddeltae:} 
The \finddeltae{} procedure identifies the earliest time an edge adjacent to an active set will be fully colored. 
It examines each edge in the graph. 
If the endpoints are in separate components ($\currentsets$) and at least one belongs to an active set ($\activesets$), it calculates a value,  $\epse(e) = (\ce(e) - \sum_{S:e\in \deltaS} \yS)/ t_e$. 
This value, $\epse(e)$, reflects the remaining coloring effort for the edge. 
It considers both the edge's length, $\ce(e)$, and how much it's been colored by active sets (sum of activation times, $\sum_{S:e\in \deltaS} \yS$). 
The term $t_e \in \{1,2\}$ represents the number of endpoints belonging to active components, since more active endpoints can speed up coloring. 
Finally, \finddeltae{} returns the minimum $\epse(e)$ across all such edges, demonstrating the earliest time that an edge becomes fully colored (denoted by $\epse$).

\paragraph{\finddeltaf:}
This procedure plays a crucial role in ensuring the algorithm respects family constraints. 
Its objective is to identify the earliest time a family containing at least one active set becomes tight, meaning the total activation time of its member sets reaches a predefined threshold, $\pi(\family)$.

The procedure starts by formulating the following LP to find a threshold value, denoted by $\epsf$. 

\begin{align*}
    &\text{maximize } \epsf \\
    &\text{s.t. } \epsf \cdot |\family \cap \activesets| + \sum_{S\in \family} \yS \le \pi(\family) &\forall \family \subseteq 2^{\V} \text{.}
\end{align*}

Regarding the constraint of the LP, we can focus solely on families containing active sets (denoted by $\activesets$) and sets with positive activation time $\yS > 0$ (denoted by $\positivesets$), as these are the only sets that contribute to reaching the tightness threshold. Inactive sets with zero activation time do not affect the left-hand side of the constraint but may only increase the right-hand side due to the monotonicity of the penalty function. Therefore, we can remove them to tighten the constraint, thus simplifying our consideration.  

Even after narrowing down to relevant sets, the number of constraints in the LP remains exponential. To address this, a binary search on $\epsf$ can be employed. However, binary search on a real value may never terminate. Section 5.3 of~\cite{DBLP:conf/soda/SharmaSW07} addresses this issue by continuing the process and reducing the binary search interval until it becomes sufficiently small, leaving only one candidate for $\epsf$. This approach works because they demonstrate that $\epsf$ must be a rational number with a bounded numerator and denominator.  

Nevertheless, this step is unnecessary, as an idea similar to their binary search provides a strong separation oracle, allowing the use of the ellipsoid method to solve the LP in polynomial time instead of relying on binary search. This approach, established by Khachiyan~\cite{DBLP:journals/jal/AspvallS80}, guarantees polynomial-time solvability for LPs.  

In our strong separation oracle, for each value of $\epsf$ considered during the binary search, we need to verify whether all constraints are satisfied or find one that is violated. This translates to checking if a newly defined function,  
\[
\pi'(\family) = \pi(\family) - \left(\epsf \cdot |\family \cap \activesets| + \sum_{S \in \family} \yS\right),
\]  
remains non-negative for all relevant families, or finding a family $\family$ for which $\pi'(\family) < 0$.  

The key insight here is that both $\pi(\family)$ and the subtracted term are submodular functions. This property allows us to efficiently find the family with the minimum $\pi'(\family)$ value using submodular minimization algorithms, which are solvable in polynomial time as shown in~\cite{DBLP:journals/jct/Schrijver00,DBLP:journals/jacm/IwataFF01}. By checking whether this minimum value is less than zero, we can determine whether a constraint is violated for the current $\epsf$ and identify the violating family and its corresponding constraint.

\paragraph{\findtightfamily:}
Here, our goal is to find a tight family containing at least one active set or return the empty set if no such family exists. 
Unlike \finddeltaf, we do not require $\epsf$ and instead use the submodular function $\pi'(\family) = \pi(\family) - \sum_{S \in \family} \yS$. 
We aim to solve the submodular minimization problem for the ground set $\activesets \cup \positivesets$ to find a family $\family$ such that $\pi'(\family) = 0$. 
It's important that $\family \cap \activesets \neq \emptyset$.

To achieve this, we iterate over $\activesets$, and for each active set $S \in \activesets$, we attempt to find a tight family containing $S$. This involves defining the submodular function $\pi'_S(\family) = \pi'(\family \cup \{S\})$ and using the submodular minimization algorithm to find a family $\family$ with the minimum value of $\pi'_S(\family)$. 
If $\pi'_S(\family) = 0$, we return $\family \cup \{S\}$. 
If, however, $\pi'_S(\family) > 0$ for all $S \in \activesets$, we return $\emptyset$.


\subsection{Useful properties}
To analyze our 2-approximation algorithm, we will leverage several lemmas related to the \SSW{} algorithm, provided in this section.

Our first lemma focuses on the inequality in Corollary~\ref{corollary:family_constraint} and demonstrates that for the family $\Dm$, this inequality becomes an equality. 
This is achieved by leveraging the submodularity of the penalty function and proving that the union of two tight families remains a tight family. 
Since $\Dm$ is constructed by progressively adding sets from tight families, we can then conclude the following lemma:

\begin{lemma} \label{lm:tight}
    In procedure \SSW, the family $\Dm$ is a tight family, i.e.,
    $$\sum_{S\in \Dm} \yS = \pi(\Dm)\text{.}$$
\end{lemma}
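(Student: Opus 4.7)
The plan is to prove the lemma by induction on the number of times a family has been added to $\Dm$ (i.e., on the number of executions of line~\ref{line:insert_to_Dm}). The base case is immediate: when no family has yet been added, $\Dm = \emptyset$, and by the empty-set property $\pi(\emptyset) = 0 = \sum_{S \in \emptyset} \yS$. For the inductive step, let $\Dm$ denote the accumulated family just before a new tight family $\family$ is inserted in line~\ref{line:insert_to_Dm}. By the inductive hypothesis, $\sum_{S \in \Dm} \yS = \pi(\Dm)$, and by the definition of \findtightfamily\ together with the fact that $\family$ was just identified as tight, $\sum_{S \in \family} \yS = \pi(\family)$. The goal is then to show $\sum_{S \in \Dm \cup \family} \yS = \pi(\Dm \cup \family)$.

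The key identity is the inclusion-exclusion identity for sums,
\begin{equation*}
\sum_{S \in \Dm \cup \family} \yS \;=\; \sum_{S \in \Dm} \yS \;+\; \sum_{S \in \family} \yS \;-\; \sum_{S \in \Dm \cap \family} \yS,
\end{equation*}
combined with the submodularity of $\pi$, which gives
\begin{equation*}
\pi(\Dm) + \pi(\family) \;\ge\; \pi(\Dm \cup \family) + \pi(\Dm \cap \family).
\end{equation*}
Substituting the two tightness equalities into the right-hand side, and applying Corollary~\ref{corollary:family_constraint} in the form $\sum_{S \in \Dm \cap \family} \yS \le \pi(\Dm \cap \family)$ to replace a $\pi$ term with a sum, yields
\begin{equation*}
\sum_{S \in \Dm \cup \family} \yS \;\ge\; \pi(\Dm \cup \family).
\end{equation*}
The reverse inequality is just Corollary~\ref{corollary:family_constraint} applied to the family $\Dm \cup \family$, so equality holds. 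This completes the inductive step.

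One subtle point that needs to be addressed before invoking the induction is that the dual values $\yS$ for sets in $\Dm$ must be stable: once a set is placed in $\Dm$, it is removed from $\activesets$ in line~\ref{line:deactivate_tight_family}, so its $\yS$ no longer grows in subsequent iterations. This guarantees that the tightness established for $\Dm$ at one iteration is preserved through later iterations, so that the inductive hypothesis is meaningful when the next family is added. I expect this bookkeeping observation, rather than the algebraic manipulation, to be the only potential pitfall; the substantive content of the argument is the one-line combination of submodularity with inclusion-exclusion that upgrades the upper bound of Corollary~\ref{corollary:family_constraint} into an equality for $\Dm$.
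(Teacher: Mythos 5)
Your proposal is correct and takes essentially the same approach as the paper: both establish the key algebraic fact (union of two tight families is tight, via inclusion-exclusion on the sum of $\yS$, submodularity of $\pi$, and Corollary~\ref{corollary:family_constraint} applied to the union and intersection) and both observe that once a set enters $\Dm$ it is deactivated so its $\yS$ is frozen. The only difference is presentational: the paper states and proves the "union of two tight families is tight" fact as a standalone step and then applies it along the sequence of tight families, whereas you inline it directly into the inductive step.
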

\begin{proof}
    Let us assume that $\family_1, \family_2 \subseteq 2^{\V}$ are two tight families.
    Since $\family_1$ is tight, we have $\sum_{S\in \family_1} \yS = \pi(\family_1)$ and a similar equality for $\family_2$. 
    This implies that
    \begin{align*}
        \sum_{S\in \family_1 \cup \family_2} \yS + \sum_{S\in \family_1 \cap \family_2} \yS &= \sum_{S\in \family_1} \yS + \sum_{S\in \family_2} \yS \\
        &= \pi(\family_1)+\pi(\family_2) \tag{$\family_1$ and $\family_2$ are tight} \\
        &\ge \pi(\family_1 \cup \family_2) + \pi(\family_1 \cap \family_2)\text{.} \tag{By submodularity of $\pi$}
    \end{align*}
    However, according to Corollary~\ref{corollary:family_constraint}, we have $\sum_{S\in \family_1 \cup \family_2} \yS \le \pi(\family_1 \cup \family_2)$ and $\sum_{S\in \family_1 \cap \family_2} \yS \le \pi(\family_1 \cap \family_2)$.
    Therefore, the above inequality should be an equality and $\sum_{S\in \family_1 \cup \family_2} \yS = \pi(\family_1 \cup \family_2)$.
    
    Now, we prove the tightness of $\Dm$.
    Line~\ref{line:insert_to_Dm} is the only place that we modify $\Dm$ and add sets of a tight family $\family$ to $\Dm$.
    Hence, we have $\Dm = \bigcup_i \family_i$, where $\family_1, \family_2, \cdots$ is the sequence of tight families that are added to $\Dm$.
    Note that once these families become tight, they remain tight, as their sets are removed from the active sets in Line~\ref{line:deactivate_tight_family}, and their corresponding values $\yS$ will never change.
    Since we show that the union of two tight families is a tight family, it is easy to conclude that $\Dm$ is also a tight family.
\end{proof}

The next lemma demonstrates that selecting the forest returned by subroutine \SSW{} and paying the penalty of $\Dm$ provides a valid solution.
We prove this by showing that $\cc(\F) \subseteq \closure(\Dm)$ using induction. 
First, we show that the connected components of the forest obtained before the while loop at Line~\ref{line:while_loop_for_removing_edges} are in $\closure(\Dm)$ (induction base) since they originated from active sets that were deactivated due to inclusion in a tight family.

Next, we argue that after each edge removal in Line~\ref{line:remove_edge}, the two resulting connected components also belong to $\closure(\Dm)$ (induction step). 
We demonstrate this using the following facts:
\begin{itemize}
    \item The connected component formed before removing the edge in Line~\ref{line:remove_edge} was in $\closure(\Dm)$ (induction hypothesis).
    \item The tight set selected in Line~\ref{line:while_loop_for_removing_edges} is in $\closure(\Dm)$.
    \item The two new components created by removing $e$ in Line~\ref{line:remove_edge} can be created by applying set intersection and set difference operations on sets already known to be in $\closure(\Dm)$. Since these operations preserve closure, the resulting components are also in $\closure(\Dm)$.
\end{itemize}

Finally, since all components of final forest are in $\closure(\Dm)$ the penalty of $\pen(\F)$ is at most as large as the penalty of $\Dm$.

\begin{lemma}
\label{lm:ssw_penalty_le_Dm_penalty}
    At the end of procedure \SSW, the penalty we need to pay by selecting forest $\F$ is at most $\pi(\Dm)$, i.e.,
    $$\pi(\pen(\F)) \le \pi(\Dm)\text{.}$$
\end{lemma}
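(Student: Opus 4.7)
The plan is to reduce the inequality to the set-theoretic containment $\cc(\F) \subseteq \closure(\Dm)$. By Lemma~\ref{lm:compcl}, $\pen(\F) = \closure(\cc(\F))$, so if every connected component of $\F$ lies in $\closure(\Dm)$ then $\pen(\F) \subseteq \closure(\Dm)$ because $\closure(\Dm)$ is already closed under union and complement. Combining monotonicity of $\pi$ with Lemma~\ref{lm:closure} then yields $\pi(\pen(\F)) \le \pi(\closure(\Dm)) = \pi(\Dm)$, as required. The remaining work is therefore entirely combinatorial: establish the containment $\cc(\F) \subseteq \closure(\Dm)$ by induction on the edge-removal phase.

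First I would handle the state of $\F$ immediately before the pruning loop at Line~\ref{line:while_loop_for_removing_edges}. Every current component $C$ of $\F$ was born active, either as an initial singleton at initialization or as a freshly merged set added to $\activesets$ at Line~\ref{line:add_to_activesets}. Upon exit of the outer while loop we have $\activesets = \emptyset$, so $C$ must have left $\activesets$ at some point. The only way a set leaves $\activesets$ without being immediately replaced by a strict superset (as happens in a subsequent merger, after which $C$ is no longer a component at all) is Line~\ref{line:deactivate_tight_family}, which triggers precisely when the set belongs to a tight family appended to $\Dm$ at Line~\ref{line:insert_to_Dm}. Hence $C \in \Dm$, giving $\cc(\F) \subseteq \Dm \subseteq \closure(\Dm)$ at the start of pruning.

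For the induction step I would consider a single iteration of the pruning loop. Assume $\cc(\F) \subseteq \closure(\Dm)$, and let $S \in \Dm$ and $e \in \F$ satisfy $\deltaS \cap \F = \{e\}$. Let $C$ be the component of $\F$ containing $e$ and $C_1, C_2$ the two components obtained from $C$ after removing $e$. Since $S$ cuts only $e$ among edges of $\F$, it cuts no edge inside $C_1$ or $C_2$; each of these being connected forces one of them to be entirely inside $S$ and the other entirely outside. Up to relabeling this gives $C_1 = C \cap S$ and $C_2 = C \setminus S$. Because $C \in \closure(\Dm)$ by hypothesis and $S \in \Dm \subseteq \closure(\Dm)$, and $\closure(\Dm)$ is closed under intersection and set difference, both $C_1, C_2 \in \closure(\Dm)$. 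The remaining components of $\F$ are untouched, so the invariant is preserved throughout pruning.

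The delicate point is the base case: a set can be repeatedly merged with both active and inactive partners before eventually being deactivated, so one cannot naively claim that every final component originated as a singleton that landed in a tight family. The right way to track it is to follow each component's life, from its creation as a singleton or merger until it is either absorbed into a strictly larger merger (at which point it ceases to be a component) or explicitly removed from $\activesets$ via Line~\ref{line:deactivate_tight_family}; because the outer loop exits only when $\activesets = \emptyset$, every surviving component must have ended its life in the latter way and was therefore inserted into $\Dm$.
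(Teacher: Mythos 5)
Your proof is correct and follows essentially the same route as the paper's: reduce to the containment $\cc(\F) \subseteq \closure(\Dm)$ via Lemmas~\ref{lm:compcl} and~\ref{lm:closure}, establish the base case by arguing that each surviving component must have exited $\activesets$ through a tight-family deactivation, and handle each pruning step by showing the two new components are $C\cap S$ and $C\setminus S$ for some $S\in\Dm$, hence remain in $\closure(\Dm)$. Your remark about tracking a component's ``life'' through merges with inactive sets is the same subtlety the paper dispatches by noting that components surviving the main loop were never merged away and so must have been deactivated at Line~\ref{line:deactivate_tight_family}.
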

\begin{proof}
    First, we want to prove that $\cc(\F) \subseteq \closure(\Dm)$.
    Observe that at the end of procedure \SSW, within the while loop of Line~\ref{line:while_loop_for_removing_edges}, forest $\F$ creates a sequence of forests $\F_0,\F_1, \cdots, \F_k$.
    Here, $\F_0$ is the forest $\F$ before Line~\ref{line:while_loop_for_removing_edges}, $\F_i$ is obtained by removing an edge from $\F_{i-1}$ in Line~\ref{line:remove_edge}, and $k$ is the number of times an edge is removed from $\F$ in Line~\ref{line:remove_edge}.
    We will use induction to prove that $\cc(\F_k) \subseteq \closure(\Dm)$ where $\F_k$ is the forest returned by procedure \SSW.
    
    For the induction base, we need to show that $\cc(\F_0) \subseteq \closure(\Dm)$. 
    Before removing any edge we have $\currentsets = \cc(\F) = \cc(\F_0)$.
    For any set in $\currentsets$, there exists a time when it was in $\activesets$ because sets are inserted into both collections at the same time in Lines~\ref{line:add_to_currentsets} and~\ref{line:add_to_activesets}.
    Since the sets in $\currentsets$ at the end of \SSW{} have not been merged with other sets, they must have been removed from $\activesets$ in Line~\ref{line:deactivate_tight_family} as members of a tight family.
    Consequently, they were inserted into $\Dm$ in Line~\ref{line:insert_to_Dm}. 
    Therefore, we can conclude that $\cc(\F_0) = \currentsets \subseteq \Dm \subseteq \closure(\Dm)$.

Now, for the induction step, we aim to prove that for any \( i \ge 0 \), if \( \cc(\F_i) \subseteq \closure(\Dm) \), then \( \cc(\F_{i+1}) \subseteq \closure(\Dm) \). 
The following arguments establish this result:
\begin{enumerate}[i.]
    \item By the induction hypothesis, $\cc(\F_i) \subseteq \closure(\Dm)$.
    \label{item:induc-hyp}
    \item Let $S \in \Dm$ and edge $e = (v,u) \in \E$ be selected in Line~\ref{line:while_loop_for_removing_edges}. Then $\deltaS \cap \F = \{e\}$.
    \label{item:one-edge}
    \item Since $e \in \deltaS$, $S$ contains exactly one endpoint of $e$. Without loss of generality, we assume $v\in S$ and $u \notin S$.
    \item Obtain $\F_{i+1} = \F_{i} \setminus \{e\}$ through Line~\ref{line:remove_edge}.
    \label{item:fi1}
    \item Let $S'\in \cc(\F_i)$ be the component of $\F_i$ containing $e$.
    From~\eqref{item:induc-hyp} we have $S'\in \closure(\Dm)$.
    \label{item:sp-in-dm}
    \item Let $S'_1$ and $S'_2$ denote the components of $S'$ created by removing $e$, where $S'_1$ contains vertex $v$ and $S'_2$ contains vertex $u$.
    \item We have $S'_1 \subseteq S$. Otherwise, assume vertex $w \in S'_1 \setminus S$. Then $S$ cuts the path from $w$ to $v$ in an edge other than $e$, contradicting~\eqref{item:one-edge}.
    \label{item:sp1-sub-s}
    \item We have $S'_2 \cap S = \emptyset$. Otherwise, assume vertex $w \in S'_2 \cap S$. Then $S$ cuts the path from $w$ to $u$ in an edge other than $e$, contradicting~\eqref{item:one-edge}.
    \label{item:sp2-no-intersect-s}
    \item From~\eqref{item:sp1-sub-s} and~\eqref{item:sp2-no-intersect-s}, we conclude $S'_1 = S' \cap S$ and $S'_2 = S' \setminus S$.
    \label{item:s1-s2-s}
    \item Since $S, S' \in \closure(\Dm)$ (from \ref{item:one-edge} and \ref{item:sp-in-dm}), and $S'_1, S'_2$ are obtained by set intersection and set difference operations, we have $S'_1, S'_2 \in \closure(\Dm)$.
    \label{item:sp1-sp2-in-dm}
    \item From \eqref{item:fi1}, $\cc(\F_{i+1}) = (\cc(\F_{i}) \setminus \{S'\}) \cup \{S'_1, S'_2\}$.
    \label{item:ccfi1}
    \item By combining~\eqref{item:induc-hyp},~\eqref{item:sp1-sp2-in-dm}, and~\eqref{item:ccfi1}, we conclude $\cc(\F_{i+1}) \subseteq \closure(\Dm)$.
\end{enumerate}

    From the induction, we have  $\F_k \subseteq \closure(\Dm)$ where $\F_k$ is the forest $\F$ returned by \SSW.
    Now, we can complete the proof of lemma.
    \begin{align*}
        \pi(\pen(\F)) &= \pi(\closure(\cc(\F))) \tag{Lemma~\ref{lm:compcl}} \\
        &= \pi(\cc(\F)) \tag{Lemma~\ref{lm:closure}} \\
        &\le \pi(\closure(\Dm)) \tag{$\F \subseteq \closure(\Dm)$} \\
        &= \pi(\Dm) \tag{Lemma~\ref{lm:closure}}
    \end{align*}
\end{proof}

The next lemma is similar to a well-known result used in coloring problems. 
It states that the cost of the final forest returned by procedure \SSW{} is at most twice the sum of the coloring duration of each subset of vertices. 
A similar lemma can be found in~\cite{DBLP:journals/siamcomp/GoemansW95} and many other papers in similar problems.
The proof leverages the property that each edge in the final forest is fully colored, combined with the fact that, at any given moment, the number of edges being colored—those adjacent to active sets—is at most twice the number of active sets at that moment.

\begin{lemma} \label{lm:ssw_forest_cost}
    At the end of the \SSW, for the cost of the final forest $\F$, we have: 
    $$\ce(\F) \le 2\sum_{S\subseteq \V} \yS\text{.}$$
\end{lemma}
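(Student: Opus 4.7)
The plan is to adapt the classical Goemans--Williamson degree-counting argument, where the pruning phase is what allows the factor $2$ rather than something larger. I would first rewrite $\ce(\F)$ as a double sum. Each edge added to $\F$ during the while loop is added precisely at the moment when $\sum_{S : e \in \deltaS} \yS = \ce(e)$, and once added, both endpoints lie in a common component so $e \notin \deltaS$ for every $S$ that is ever active afterwards; hence the identity $\ce(e) = \sum_{S : e \in \deltaS} \yS$ persists at termination for every edge ever added, including those later removed by pruning. Restricting to the surviving edges and swapping sums gives
\[
\ce(\F) = \sum_{e \in \F} \sum_{S : e \in \deltaS} \yS = \sum_{S \subseteq \V} \yS \cdot |\deltaS \cap \F|.
\]
Interpreting $\yS$ as the total elapsed time during which $S$ was active, this equals $\sum_{iter} \epst \sum_{S \in \activesets} |\delta(S) \cap \F|$, whereas $\sum_S \yS = \sum_{iter} \epst \,|\activesets|$. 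So it suffices to show the pointwise inequality
\[
\sum_{S \in \activesets} |\deltaS \cap \F| \ \le\ 2\,|\activesets|
\]
at every iteration of the main loop, since summing this inequality over iterations weighted by $\epst$ yields $\ce(\F) \le 2\sum_S \yS$.

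For the pointwise inequality I would fix a moment $t$ and consider the contracted graph $H(t)$ whose vertices are the current components in $\currentsets$ and whose edges are the edges of $\F$ joining two distinct components. Since $\F$ is a forest, so is $H(t)$, and the degree of any $C \in \currentsets$ in $H(t)$ equals $|\delta(C) \cap \F|$. The crucial observation is that every inactive current component $C \in \currentsets \setminus \activesets$ belongs to $\Dm$: inspecting the algorithm, a component can leave $\activesets$ either by being deactivated with a tight family in Line~\ref{line:deactivate_tight_family} (in which case it is inserted into $\Dm$ in Line~\ref{line:insert_to_Dm}) or by being merged with another component in Lines~\ref{line:add_to_currentsets}--\ref{line:add_to_activesets} (in which case it also disappears from $\currentsets$). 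Therefore every surviving inactive component was at some earlier time added to $\Dm$. By the termination condition of the pruning loop at Line~\ref{line:while_loop_for_removing_edges}, no $S \in \Dm$ satisfies $|\deltaS \cap \F| = 1$, so every inactive component has degree $0$ or at least $2$ in $H(t)$.

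A standard forest counting argument then closes the proof. Letting $A$ and $I$ denote the non-isolated active and inactive components of $H(t)$ respectively, the handshake identity and the forest bound give $\sum_{v \in A \cup I} \deg_{H(t)}(v) = 2|E(H(t))| \le 2(|A| + |I| - 1)$, while every $v \in I$ has degree at least $2$, so
\[
\sum_{v \in A} \deg_{H(t)}(v) \ \le\ 2|A| - 2 \ \le\ 2|\activesets|.
\]
Isolated active components contribute $0$, proving the pointwise inequality; summing over iterations yields the lemma. I expect the main obstacle to be the careful verification that every surviving inactive component belongs to $\Dm$, since this is precisely what couples the pruning phase to the degree bound and prevents inactive leaves in $H(t)$; the degenerate cases (e.g.\ $H(t)$ with no edges, or no inactive components) are immediate but should be acknowledged.
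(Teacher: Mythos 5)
Your proof is correct and follows essentially the same route as the paper's: both reduce the bound to the pointwise inequality $\sum_{S\in\activesets}|\deltaS\cap\F|\le 2|\activesets|$ at each moment, prove it by contracting the current components over the final forest, and use the pruning phase together with the observation that every surviving inactive component lies in $\Dm$ to rule out inactive leaves, after which a standard forest degree-count finishes. The only cosmetic difference is that you phrase the first step as an exact double-counting identity $\ce(\F)=\sum_{S}\yS\,|\deltaS\cap\F|$ while the paper settles for the inequality $\ce(\F)\le\sum_{S}\yS\,|\deltaS\cap\F|$; both are valid here.
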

\begin{proof}
    In one step of the algorithm, we increase all \( \yS \) values by \( \epst \). During each moment of this \( \epst \) time of coloring, active sets color a portion of edges in the final forest \( \F \) equal to the number of edges in \( \F\) cut by each active set, which for set \( S \) is \( \lvert \deltaS \cap \F \rvert \).

    In addition, for this step, we define the graph constructed by contracting vertices in each component of \( \currentsets \) over the final forest \( \F \) and removing isolated vertices as \( \Fc \). $\Fc$ has the following properties:

\begin{itemize}
    \item 
    First, \( \Fc \) is a forest because \( \F \) is a forest, and \( \currentsets \) are components created by merging edges of \( \F \).

    \item
    Second, all leaves of \( \Fc \) are active sets. To see why: Assume there exists a leaf in \( \Fc \) that is not an active set at the moment. This implies that \( \Dm \) contains the corresponding set of this leaf. Let us denote it by \( S \), and the edge going out from \( S \) in \( \F \) is \( e \). We can observe that \( \deltaS \cap \F = \{e\} \) because \( S \) is a leaf in the contracted forest \( \Fc \). According to the algorithm at line \ref{line:while_loop_for_removing_edges}, \( e \) must be removed by the end of the algorithm, which leads to a contradiction.

    \item 
    For any active set $S \in \currentsets$, if $v$ is the vertex corresponding to $S$, we have $\deg_{\Fc}(v) = \lvert \deltaS \cap \F \rvert$. 
\end{itemize}

    Since all leaves in \( \Fc \) are active sets, the total degree of vertices in \( \Fc \) corresponding to active sets is at most twice the number of active sets. This can be obtained as follows:
    \begin{align*}
        \sum_{v \in \Va} \degree_{\Fc}(v) 
        &= \sum_{v \in \Va} \degree_{\Fc}(v) + \sum_{v \in \Vi} \degree_{\Fc}(v) - \sum_{v \in \Vi} \degree_{\Fc}(v) \\
        &= \sum_{v \in \Fc} \degree_{\Fc}(v) - \sum_{v \in \Vi} \degree_{\Fc}(v) \tag{$\Fc = \Va \cup \Vi, \Va \cap \Vi = \emptyset$}\\
        &\le \sum_{v \in \Fc} \degree_{\Fc}(v) - 2|\Vi|\tag{Inactive sets are not leaves or isolated}\\
        &\le 2|\Fc| - 2|\Vi|\tag{For any forest $\F$, $\sum_{v \in \F} \degree(v) \le 2|\F|$}\\
        & = 2|\Va|
    \end{align*}
    where \( \Va \) and \( \Vi \) are subsets of vertices in \( \Fc \) that correspond to active or inactive sets, respectively.

    Therefore if we translate from $\Fc$ to $\F$ we have:
    \begin{align*}
        \sum_{S \in \activesets} \lvert \deltaS \cap \F \rvert
        \le 2\lvert\activesets\rvert
    \end{align*}

    By multiplying by \( \epst \), we have:

    \[
    \sum_{S \in \activesets} \epst \cdot \lvert \deltaS \cap \F \rvert \le \epst \cdot 2 \lvert \activesets \rvert
    \]

    Here, the left-hand side represents the portion of edges colored during this $\epst$ interval, while the right-hand side represents the total coloring duration of all active sets in that same interval.

    Now, if we sum the last inequality over steps of the algorithm, the left-hand side would be at least \( \ce(\F) \) since every edge in \( \F \) receives a color. On the other hand, the right-hand side would be \( 2 \sum_{S \subseteq \V} \yS \) since in every step we increase \( \y \) by \( \epst \) for all sets in \( \activesets \). Therefore, we have completed the proof.
\end{proof}

Finally, we show that \SSW{} runs in polynomial time.
This is crucial for our algorithm's overall efficiency, as it calls \SSW{} multiple times.
We can demonstrate this by noting that the number of connected components and the number of active sets at the beginning is at most $|\V|$.
After each iteration of the while loop in Line~\ref{line:while-loop-for-coloring}, either two components are merged or one active set is deactivated.
Thus, the number of iterations of the while loop is at most $2|\V|$.
Finally, the number of operations in each iteration is polynomial, which concludes the lemma.

\begin{lemma} \label{lm:ssw_poly_time}
    The \SSW{} algorithm runs in polynomial time.
\end{lemma}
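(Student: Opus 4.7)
The plan is to bound the total number of iterations of both while loops in the algorithm and then argue that every single iteration (including each subroutine call) takes polynomial time.

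First, I would bound the main while loop at Line~\ref{line:while-loop-for-coloring}. Initially $|\activesets| \le |\V|$ and $|\currentsets| \le |\V|$. Each iteration ends because either (i) $\epst = \epse$, in which case at least one edge becomes fully colored and either is discarded or causes two components of $\currentsets$ to merge, reducing $|\currentsets|$ by at least one, or (ii) $\epst = \epsf$, in which case \findtightfamily{} returns a non-empty family and at least one active set is removed from $\activesets$ at Line~\ref{line:deactivate_tight_family}. Since merges decrease $|\currentsets|$ and deactivations decrease $|\activesets|$ (and once a set is removed from $\activesets$ it never returns), the total number of iterations is $O(|\V|)$. For the pruning loop at Line~\ref{line:while_loop_for_removing_edges}, each iteration removes one edge from $\F$, giving at most $|\V|-1$ iterations.

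Next, I would show each iteration does a polynomial amount of work. The bookkeeping (loops over edges to detect full coloring, merging components, updating $\yS$) is clearly polynomial. The subroutine \finddeltae{} only iterates over edges and computes $\epse(e)$ for each, so it is polynomial. The main obstacle is to argue that \finddeltaf{} and \findtightfamily{} run in polynomial time, since both implicitly reason about an exponential-sized family of constraints. For \finddeltaf{}, I would invoke the ellipsoid method on the LP defining $\epsf$, using as a strong separation oracle the minimization of the submodular function $\pi'(\family) = \pi(\family) - \epsf\cdot|\family\cap\activesets| - \sum_{S\in\family}\yS$ over the ground set $\activesets \cup \positivesets$; this minimization is polynomial by the submodular minimization results of~\cite{DBLP:journals/jct/Schrijver00,DBLP:journals/jacm/IwataFF01}. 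For \findtightfamily{}, I would argue that iterating over $S \in \activesets$ and minimizing the submodular function $\pi'_S(\family)=\pi(\family\cup\{S\})-\sum_{T\in\family\cup\{S\}}y_T$ again reduces to $|\activesets|$ calls to polynomial-time submodular minimization.

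Combining the $O(|\V|)$ iteration bound with the polynomial per-iteration cost and the polynomial cost of each subroutine call yields the desired polynomial overall running time. The hardest part is really just the invocation of the submodular-minimization machinery together with the ellipsoid method; everything else is routine amortization over iterations, and this structure mirrors the outline already sketched in the paragraph preceding the lemma statement.
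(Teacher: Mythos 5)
Your proposal matches the paper's proof in substance: bound the number of iterations of each while loop by $O(|\V|)$ (merges decrease $|\currentsets|$, deactivations decrease $|\activesets|$, and each pruning step removes an edge), then observe that each subroutine runs in polynomial time, with the nontrivial ones (\finddeltaf{} and \findtightfamily{}) reducing to polynomial-time submodular minimization over the relevant ground set. The only cosmetic difference is that you invoke the ellipsoid-method formulation for \finddeltaf{} while the paper's proof text mentions binary search; both variants are already discussed in Section~\ref{sec:SSW_suprocedures}, so this does not constitute a different route.
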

\begin{proof}
First, the while loop in line~\ref{line:while-loop-for-coloring} runs at most \( 2|\V| \) times. During each iteration, either one component becomes tight and is removed from the active sets, or two components are merged. Therefore, in at most \( |\V| \) steps, \( |\V| \) components become inactive, and there are at most \( |\V| - 1 \) merges.

Within this loop, \finddeltae{} operates in polynomial time by checking all edges, \finddeltaf{} operates in polynomial time using binary search to find the subset minimizing a submodular function. Submodular minimization is known to be achievable in polynomial time. Similarly, \findtightfamily{} operates in polynomial time.

To complete the proof, consider the second while loop in Line~\ref{line:while_loop_for_removing_edges}. Checking all subsets and edges to satisfy the loop's condition is polynomial. Additionally, since each iteration removes one edge from the forest, there are at most \( |\V| \) iterations. Hence, the total runtime is polynomial.
\end{proof}


\section{2-Approximation Algorithm for Submodular PCF}
\label{sec:2_apx_alg}
In this section, we present \PCF{}, a 2-approximation algorithm for our prize-collecting forest problem with submodular penalties, utilizing a recursive approach.
This algorithm utilizes the 3-approximation algorithm described in Section~\ref{sec:3-apx-alg} as a key component. 
Algorithm \ref{alg:pcf2} illustrates the steps of obtaining a 2-approximation solution.

Recall that the 3-approximation algorithm explained in the previous section is denoted as $\Alg$.
For a given instance $\I = (\G, \pi)$, we first call $\Alg$ (Line \ref{line:alg-output}). 
It is worth noting that we use $\Dalg$ to denote the union of tight families in the 3-approximation algorithm i.e., for instance $\I$, $\Dalg$ is the final $\Dm$ returned by Algorithm \ref{alg:pcf3}. 
If $\pi(\Dalg) = 0$, the total penalty that $\Alg$ pays is $0$ which means we have a 2-approximate solution (based on Lemma~\ref{lm:algcost}). 
Therefore, we can immediately return the output of $\Alg$ as the solution (Lines \ref{line:check-penalty-zero} and \ref{line:return-early}). 

On the other hand, since $\Alg$ suggests paying the penalty for the family $\Dalg$, we intuitively consider the penalty of this family to be already paid and solve the problem to satisfy other demands.
Since the penalty is a function of families, it is not trivial to simply subtract the penalty of these sets.
However, the \emph{marginal cost} provides a natural way to account for the cost of a family, allowing the introduction of a new penalty function.
Consequently, we define a new instance $\R = (\G, \pir)$ where $\pir$ is as follows.
For all $\family \subseteq 2^{\V}$:
\begin{align*}
    \pir(\family) &= \pi(\family \mid \Dalg).
\end{align*}
Based on Lemma~\ref{lm:margsub}, the new penalty function has the required properties for being a penalty function for our problem.
Lines \ref{line:pir} and \ref{line:construct_R} of the algorithm outline the construction of $\pir$ and the instance $\R$.

Next, we recursively find the 2-approximate solution for instance $\R$ (Line~\ref{line:get_recursive_output}). 
Assume the solution for this instance is $\Frec$. 
This solution is also valid for the input instance since the graph remains the same in instance $\R$.
The final solution is the one with the lower cost between the \SSW{} solution and the solution returned by the recursive call (Lines~\ref{line:return-min} to~\ref{line:return_Frec}). 
In the next section, we provide the analysis of the described algorithm.

\begin{algorithm}[ht]
  \caption{PCF 2-approximation algorithm}
  \label{alg:pcf2}
  \hspace*{\algorithmicindent} \textbf{Input:} An undirected graph $\G=(\V, \E, \ce)$ with edge costs $\ce: \E \rightarrow \mathbb{R}_{\ge 0}$ and penalties $\pi : 2^{2^{\V}} \rightarrow \mathbb{R}_{\ge 0}$. \\
  \hspace*{\algorithmicindent} \textbf{Output:} A forest $\F$ with the minimum cost.
  \begin{algorithmic}[1]
    \Procedure{$\PCF$}{$\I=(\G,\ \pi)$}
        \State $(\Falg, \Dalg) \gets \Alg(\I)$
        \label{line:alg-output}
        \State $\cost(\Alg) \gets \ce(\Falg) + \pi(\pen(\Falg))$
        \label{line:cost-alg}
        \If{$\pi(\Dalg)=0$} \label{line:check-penalty-zero}
        \State\Return $\Falg$ \label{line:return-early}
        \EndIf
        \State Define $\pir : 2^{2^{\V}} \rightarrow \mathbb{R}_{\ge 0}$ as $\pir(\family) = \pi(\family \mid \Dalg)$
        \label{line:pir}
        \State Let instance $\R$ of the problem be the graph $\G$ with penalty function $\pir$
        \label{line:construct_R}
        \State $\Frec \gets \PCF(\R)$ \label{line:get_recursive_output}
        \State $\cost(\Rec) \gets \ce(\Frec) + \pi(\pen(\Frec))$
        \label{line:cost-r}
        \If{$\cost(\Alg) \leq \cost(\Rec)$}
        \label{line:return-min}
        \State\Return $\Falg$
        \Else
        \State \Return $\Frec$ \label{line:return_Frec}
        \EndIf
    \EndProcedure
  \end{algorithmic}
\end{algorithm}


\section{Analysis of the 2-approximation Algorithm}
\label{sec:2_apx_analysis}

In this section, we analyze the approximation ratio achieved by Algorithm~\ref{alg:pcf2}.
To analyze our algorithm, we compare the cost of the solution returned by this algorithm on an instance $\I=(\G,\pi)$ to the cost of an optimal solution $\Opt$.
In our analysis, we focus on the algorithm's variables and output in a single $\PCF$ call where $\I$ is the input and use induction to analyze the recursive call.
Additionally, we use the variables $\yS$ from Algorithm~\ref{alg:pcf3}, executed during the call to the \SSW{} procedure in Line~\ref{line:alg-output}, to bound the cost of the different solutions we obtain.
The variable $\yS$ represents the coloring duration of a set $S \subseteq \V$ in \SSW{}.
We use $\SSW$ and $\Rec$ to refer to the solutions obtained using the call to the 3-approximation algorithm and the recursive call, respectively.

We begin by defining some notations and classifying sets in $\Dalg$.
Let $\Fopt$ denote the forest used in the optimal solution. 
Similarly, let $\Dopt$ denote the family for which the optimal solution pays the penalty, i.e., $\Dopt = \pen(\Fopt)$.

\begin{definition}\label{def:paid}
    Recall that for instance $\I$, $\Dalg$ is the family returned by $\SSW$ in the output (Line \ref{line:alg-output} of Algorithm \ref{alg:pcf2}). We partition $\Dalg$ into two disjoint families as follows:
    \begin{align*}
        \Dpaid &= \{ S \in \Dalg \mid \pi(\{S\} \mid \Dopt) = 0 \} \\
        \Dunpaid &= \{ S \in \Dalg \mid \pi(\{S\} \mid \Dopt) > 0 \}
    \end{align*}
\end{definition}

From this definition, the following property follows for $\Dunpaid$.

\begin{lemma}
\label{lm:unpaid-means-notin-dopt}
    If a set $S \in \Dunpaid$, then $S \notin \Dopt$.
\end{lemma}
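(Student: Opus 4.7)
The plan is to prove the contrapositive: I will assume $S \in \Dopt$ and show that $S \notin \Dunpaid$, which by the partition of $\Dalg$ into $\Dpaid \cup \Dunpaid$ will establish the lemma (since the statement is vacuous when $S \notin \Dalg$ in the first place, and for $S \in \Dalg$ the contrapositive gives exactly what we want).

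Suppose $S \in \Dopt$. Then $\{S\} \subseteq \Dopt$, so $\{S\} \cup \Dopt = \Dopt$. Applying the definition of marginal cost from Definition~\ref{def:marg}, I would compute
\[
\pi(\{S\} \mid \Dopt) \;=\; \pi(\{S\} \cup \Dopt) - \pi(\Dopt) \;=\; \pi(\Dopt) - \pi(\Dopt) \;=\; 0.
\]
By the definition of $\Dunpaid$ in Definition~\ref{def:paid}, membership in $\Dunpaid$ requires $\pi(\{S\} \mid \Dopt) > 0$, so this computation shows $S \notin \Dunpaid$, completing the contrapositive.

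There is no real obstacle here: the lemma is an immediate consequence of unpacking the definitions of marginal cost and of $\Dunpaid$. The only thing worth noting is that no submodularity, monotonicity, or union property is invoked; the result holds purely because $\{S\} \cup \Dopt$ collapses to $\Dopt$ whenever $S \in \Dopt$.
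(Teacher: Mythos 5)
Your proof is correct and is essentially the same as the paper's: both rest on the single observation that $S \in \Dopt$ forces $\{S\}\cup\Dopt = \Dopt$ and hence $\pi(\{S\}\mid\Dopt)=0$, contradicting the defining condition of $\Dunpaid$. The paper phrases it as a proof by contradiction while you phrase it as the contrapositive, but these are the same argument.
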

\begin{proof}
    We prove this by contradiction. Assuming $S \in \Dopt$, we have:
    \begin{align*}
        \pi(\{S\} \mid \Dopt) &= \pi(\{S\} \cup \Dopt) - \pi(\Dopt) \\
        &= \pi(\Dopt) - \pi(\Dopt) = 0. \tag{$S \in \Dopt$}
    \end{align*}
    On the other hand, since $S \in \Dunpaid$, from Definition \ref{def:paid} we have:
    \begin{align*}
        \pi(\{S\} \mid \Dopt) > 0 \text{,}
    \end{align*}
    which is a contradiction.
\end{proof}

In the following lemma, we show that all sets in $\Dunpaid$ cut at least one edge of the forest in the optimal solution. 

\begin{lemma}
\label{lm:unpaid-eq-cut-opt}
    For any set $S \in \Dunpaid$, there is at least one edge $e$ in $\Fopt$ such that $e \in \delta(S)$.
\end{lemma}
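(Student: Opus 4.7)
The plan is to proceed by contrapositive, leveraging the already-established Lemma \ref{lm:unpaid-means-notin-dopt} together with the definition of $\pen(\cdot)$ from Definition \ref{def:pen}.

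Suppose for contradiction that some $S \in \Dunpaid$ has $\delta(S) \cap \Fopt = \emptyset$, that is, $S$ cuts no edge of the optimal forest. Then by Definition \ref{def:pen}, $S$ lies in $\pen(\Fopt)$. Since the optimal solution pays the penalty for exactly the family $\Dopt = \pen(\Fopt)$, we obtain $S \in \Dopt$.

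On the other hand, Lemma \ref{lm:unpaid-means-notin-dopt} asserts that every set in $\Dunpaid$ must lie outside $\Dopt$. This immediately contradicts $S \in \Dopt$, so the assumption fails and $S$ must cut at least one edge of $\Fopt$.

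The argument is essentially immediate once Definition \ref{def:pen} and Lemma \ref{lm:unpaid-means-notin-dopt} are in hand, so no real obstacle is expected; the only subtlety is being careful that $\Dopt$ is indeed interpreted as $\pen(\Fopt)$ (the family that the optimal solution is charged for), which is exactly how it was defined at the start of the section. With that identification, the proof is a one-line contrapositive.
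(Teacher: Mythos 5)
Your proof is correct and uses the same two ingredients as the paper's own argument: Lemma~\ref{lm:unpaid-means-notin-dopt} (to place $S$ outside $\Dopt = \pen(\Fopt)$) and Definition~\ref{def:pen} (to translate membership in $\pen(\Fopt)$ into cutting no edge of $\Fopt$). The paper states it directly rather than by contradiction, but this is the same argument.
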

\begin{proof}
    From lemma \ref{lm:unpaid-means-notin-dopt}, we know that $S \notin \Dopt = \pen(\Fopt)$. Therefore, by Definition \ref{def:pen}
    \begin{align*}
        \lvert\delta(S) \cap \Fopt\rvert > 0 \text{,}
    \end{align*}
    which means $\exists e \in \delta(S) \cap \Fopt$.
\end{proof}

Now, since all sets in $\Dunpaid$ cut at least one edge of $\Fopt$, we can divide this family into two families as follows:

\begin{definition}
\label{def:single-mult}
    Let us partition sets in family $\Dunpaid$ into two disjoint families $\Dsingle$ and $\Dmult$ based on the number of edges of the forest $\Fopt$ they cut. Formally,
    \begin{align*}
        \Dsingle &= \{ S \in \Dunpaid \mid |\delta(S) \cap \Fopt| = 1 \} \\
        \Dmult &= \{ S \in \Dunpaid \mid |\delta(S) \cap \Fopt| > 1 \}.
    \end{align*}
    In addition, we define $\ysingle$ and $\ymult$ based on the $\yS$ values for sets in $\Dsingle$ and $\Dmult$:
    \begin{align*}
        \ysingle &= \sum_{S \in \Dsingle} \yS &
        \ymult &= \sum_{S \in \Dmult} \yS
    \end{align*}    
\end{definition}

\begin{figure}[ht]
    \centering
    \begin{tikzpicture}[scale=1]

\def\R{Red}
\def\G{Green}
\def\B{Blue}
\def\d{0.65}

\tikzstyle{node} = [circle, minimum size=2*\d, inner sep=0pt]
\tikzstyle{smallnode} = [circle, minimum size=\d, inner sep=0pt]
\tikzstyle{pie} = [pattern=north west lines]

\node[node] (Dssw) at (0, 6*\d) {\begin{tikzpicture}
    \draw[pie, pattern color=\R] (0,0) -- (0:\d) arc(0:100:\d) -- cycle;
    \draw[pie, pattern color=\G] (0,0) -- (100:\d) arc(100:260:\d) -- cycle;
    \draw[pie, pattern color=\B] (0,0) -- (260:\d) arc(260:360:\d) -- cycle;
\end{tikzpicture}};

\node[node] (Dpaid) at (-3*\d, 3*\d) {\begin{tikzpicture}
    \draw[pie, pattern color=\G] (0,0) -- (100:\d) arc(100:260:\d) -- cycle;
\end{tikzpicture}};

\node[node] (Dunpaid) at (3*\d, 3*\d) {\begin{tikzpicture}
    \draw[pie, pattern color=\R] (0,0) -- (0:\d) arc(0:100:\d) -- cycle;
    \draw[pie, pattern color=\B] (0,0) -- (260:\d) arc(260:360:\d) -- cycle;
\end{tikzpicture}};

\node[smallnode] (Dmultiple) at (1.7*\d, 0) {\begin{tikzpicture}
    \draw[pie, pattern color=\R] (0,0) -- (0:\d) arc(0:100:\d) -- cycle;
\end{tikzpicture}};

\node[smallnode] (Dsingle) at (4.5*\d, 0) {\begin{tikzpicture}
    \draw[pie, pattern color=\B] (0,0) -- (260:\d) arc(260:360:\d) -- cycle;
\end{tikzpicture}};

\draw[->] (Dssw) -- (Dpaid);
\draw[->] (Dssw) -- (Dunpaid);
\draw[->] (Dunpaid) -- ($(Dmultiple) + (0, 1*\d)$);
\draw[->] (Dunpaid) -- ($(Dsingle) + (0, 1*\d)$);

\node at (-3*\d, 4.3*\d) {$D_{\text{paid}}$};
\node at (3*\d, 4.3*\d) {$D_{\text{unpaid}}$};
\node at (1.5*\d, 0.8*\d) {$D_{\text{multiple}}$};
\node at (4.5*\d, 0.8*\d) {$D_{\text{single}}$};
\node at (0*\d, 7.3*\d) {$D_{\text{SSW}}$};

\end{tikzpicture}
    \caption{Tight sets returned by \SSW{}, denoted by $\Dalg$, are partitioned into $\Dpaid$ and $\Dunpaid$, with $\Dunpaid$ further divided into $\Dsingle$ and $\Dmult$.}
    \label{fig:partitions-def}
\end{figure}

Up to this point, our approach involves deciding to pay the penalty of $\Dalg$, which we then divide into sets based on whether paying their penalty increases the penalty of the family that the optimal solution pays. We have shown that sets increasing the optimal solution's penalty must cut at least one edge of the optimal solution. Thus, we classify these sets based on whether they cut exactly one edge or more than one edge.
Figure~\ref{fig:partitions-def} demonstrates the classification of sets in $\Dalg$.

Next, we use this classification to establish a lower bound for the cost of the optimal solution and upper bounds for $\SSW$ and $\Rec$ solutions. We will employ induction to complete the proof of the approximation guarantee. We begin by providing a lower bound for the cost of the optimal solution.

\begin{lemma}
\label{lm:optcost}
    For instance $I$, the cost of the optimal solution can be bounded as follows:
    $$\cost(\Opt) \ge \sum_{S \subseteq V}\yS + \ymult.$$
\end{lemma}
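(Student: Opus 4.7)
The plan is to lower-bound the two summands of $\cost(\Opt) = \ce(\Fopt) + \pi(\Dopt)$ separately in terms of the dual variables $\yS$ produced by the \SSW{} call in Line~\ref{line:alg-output}, and then redistribute the resulting bound to cover $\sum_S \yS + \ymult$ by a case analysis over sets with $\yS > 0$.

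First, I would apply Corollary~\ref{corollary:edge_constraint} edgewise to $\Fopt$: since $\sum_{S : e \in \delta(S)} \yS \le \ce(e)$ for every $e$, summing over $e \in \Fopt$ and swapping the order of summation gives
$$\ce(\Fopt) \;\ge\; \sum_{S \subseteq \V} \yS \cdot |\delta(S) \cap \Fopt|.$$
Next, I would apply Corollary~\ref{corollary:family_constraint} to the family $\Dopt$ itself to obtain
$$\pi(\Dopt) \;\ge\; \sum_{S \in \Dopt} \yS.$$
Adding these two inequalities gives a clean lower bound on $\cost(\Opt)$, and it remains to show this bound is at least $\sum_S \yS + \ymult$.

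For this final step I would argue set-by-set: for each $S$ with $\yS > 0$, I compare its contribution to the lower bound, namely $\yS \cdot |\delta(S) \cap \Fopt| + \yS \cdot [S \in \Dopt]$, against its contribution to the target, namely $\yS + \yS \cdot [S \in \Dmult]$. Recalling that $\Dopt = \pen(\Fopt)$, so $S \in \Dopt$ is equivalent to $|\delta(S) \cap \Fopt| = 0$, and using Lemma~\ref{lm:unpaid-means-notin-dopt} to see that $\Dmult \subseteq \Dunpaid$ is disjoint from $\Dopt$, the three cases are: (i) $S \in \Dopt$, where both sides contribute $\yS$; (ii) $S \notin \Dopt$ with $S \notin \Dmult$, where the lower bound contributes at least $\yS$ since $|\delta(S) \cap \Fopt| \ge 1$; (iii) $S \in \Dmult$, where by Definition~\ref{def:single-mult} we have $|\delta(S) \cap \Fopt| \ge 2$, so the lower bound contributes at least $2\yS$, matching the $\yS + \yS$ on the target side.

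I do not expect any major obstacle here; the only subtlety is making sure the case analysis is exhaustive over all $S$ with $\yS > 0$, including sets that received coloring time in \SSW{} but were subsequently merged and are therefore neither in $\Dalg$ nor in $\Dopt$. Such sets automatically satisfy $|\delta(S) \cap \Fopt| \ge 1$ if $S \notin \Dopt$, and fall into case (ii) above, so the inequality goes through uniformly.
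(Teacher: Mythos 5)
Your proof is correct and follows essentially the same route as the paper: both use Corollary~\ref{corollary:edge_constraint} on each edge of $\Fopt$ and Corollary~\ref{corollary:family_constraint} on $\Dopt$, then account set-by-set for the extra $\ymult$ via the fact that sets in $\Dmult$ are disjoint from $\Dopt$ and cut at least two edges of $\Fopt$. The paper writes this accounting as a chain of summation inequalities partitioned over $\Dopt$ versus $\Dunpaid$, whereas you frame it as a three-way case analysis on each $S$ with $\yS > 0$, but the underlying argument is identical.
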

\begin{proof}
    First, we we know that
    \begin{align*}
        \cost(\Opt) = \ce(\Fopt) + \pi(\Dopt).
    \end{align*}
    We obtain lower bounds for both the cost of $\Fopt$ and the penalty paid by the optimal solution for $\Dopt$ in terms of $\yS$ values.
    Based on Corollary \ref{corollary:family_constraint}, for any family $\mathcal{S}$ we have $\sum_{S \in \mathcal{S}} \yS \le \pi(\mathcal{S})$. Applying this to $\Dopt$ results in the following bound:
    \begin{align*}
         \pi(\Dopt) \ge \sum_{S \in \Dopt} \yS. \tag{Corollary \ref{corollary:family_constraint}}
    \end{align*}
    Also, based on Corollary \ref{corollary:edge_constraint}, for any $e \in \E$, we have $\sum_{S: e \in \deltaS} \yS \le \ce(e)$. Therefore for $\ce(\Fopt)$ we have
    \begin{align*}
        \ce(\Fopt) &= \sum_{e \in \Fopt} \ce(e) \\
        &\ge \sum_{e \in \Fopt} \sum_{\substack{S \subseteq \V\\e \in \deltaS}} \yS \tag{Corollary \ref{corollary:edge_constraint}}\\
        &= \sum_{S \subseteq \V} \yS \cdot |\delta(S) \cap \Fopt| \tag{Reordering summation}\\
        &= \sum_{S \notin \Dopt} \yS \cdot |\delta(S) \cap \Fopt| \tag{$\forall S \in \Dopt: |\delta(S) \cap \Fopt| = 0$}\\
        &\ge \sum_{S \notin \Dopt} \yS + \sum_{\substack{S \notin \Dopt\\|\delta(S) \cap \Fopt| \ge 2}} \yS \tag{$\forall S \notin \Dopt: |\delta(S) \cap \Fopt| \ge 1$}\\
        &\ge \sum_{S \notin \Dopt} \yS + \sum_{\substack{S \in \Dunpaid\\|\delta(S) \cap \Fopt| \ge 2}} \yS \tag{Lemma \ref{lm:unpaid-means-notin-dopt}} \\
        &= \sum_{S \notin \Dopt} \yS + \sum_{S \in \Dmult} \yS \tag{Definition \ref{def:single-mult}} \\
        &= \sum_{S \notin \Dopt} \yS + \ymult.
    \end{align*}
    Combining these components completes the proof:
    \begin{align*}
        \cost(\Opt) &= \ce(\Fopt) + \pi(\Dopt)\\
        &\ge \sum_{S \notin \Dopt} \yS + \ymult + \sum_{S \in \Dopt} \yS\\
        &= \sum_{S \subseteq \V} \yS + \ymult.
    \end{align*}
\end{proof}

Next, we give upper bounds for the costs of solutions $\SSW$ and $\Rec$ which are used in our algorithm. We begin with an upper bound for $\SSW$ which uses the lower bound obtained for $\Opt$.

\begin{lemma} \label{lm:algcost}
    For instance $I$, we have the following bound for the cost of the solution $\Alg$:
    $$\cost(\Alg) \le 2\cost(\Opt) - 2\ymult + \pi(\Dalg).$$
\end{lemma}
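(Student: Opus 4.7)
The plan is to bound $\cost(\Alg) = \ce(\Falg) + \pi(\pen(\Falg))$ by combining the three earlier ingredients: the forest cost bound from Lemma~\ref{lm:ssw_forest_cost}, the penalty bound from Lemma~\ref{lm:ssw_penalty_le_Dm_penalty}, and the optimal cost lower bound from Lemma~\ref{lm:optcost}. No new combinatorial argument about \SSW{}'s execution is needed; everything is already packaged in those lemmas.

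First, I would apply Lemma~\ref{lm:ssw_forest_cost} to obtain $\ce(\Falg) \le 2\sum_{S\subseteq\V} \yS$, and Lemma~\ref{lm:ssw_penalty_le_Dm_penalty} to obtain $\pi(\pen(\Falg)) \le \pi(\Dalg)$, since the $\Dm$ returned by \SSW{} on instance $\I$ is exactly $\Dalg$ by definition. Summing these yields
$$\cost(\Alg) \le 2\sum_{S\subseteq\V}\yS + \pi(\Dalg).$$

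Next, Lemma~\ref{lm:optcost} gives $\cost(\Opt) \ge \sum_{S\subseteq\V}\yS + \ymult$, which rearranges to $\sum_{S\subseteq\V}\yS \le \cost(\Opt) - \ymult$. Substituting into the previous inequality produces the desired bound
$$\cost(\Alg) \le 2\cost(\Opt) - 2\ymult + \pi(\Dalg).$$

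There is no real obstacle here; the content of the lemma is the assembly of previously established bounds into a single inequality that isolates the gain $2\ymult$ coming from sets in $\Dmult$ (which cut $\Fopt$ at least twice and therefore tighten the forest-cost lower bound on $\Opt$) and the additive loss $\pi(\Dalg)$ that \SSW{} pays beyond what $\Opt$ does. This form is exactly what will later let the recursion trade off \SSW{}'s solution against the recursive one when $\Dmult$ is small but $\pi(\Dalg)$ is large.
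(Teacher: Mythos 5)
Your proof is correct and matches the paper's argument exactly: both apply Lemma~\ref{lm:ssw_penalty_le_Dm_penalty} and Lemma~\ref{lm:ssw_forest_cost} to bound $\cost(\Alg)$ by $2\sum_{S\subseteq\V}\yS + \pi(\Dalg)$, then substitute the lower bound from Lemma~\ref{lm:optcost}.
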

\begin{proof}
    The cost of $\SSW$ can be upper bounded by 
    \begin{align*}
        \cost(\Alg) &= \ce(\Falg) + \pi(\pen(\Falg))\\
        &\le \ce(\Falg) + \pi(\Dalg). \tag{Lemma \ref{lm:ssw_penalty_le_Dm_penalty}}
    \end{align*}
    In addition, based on Lemma \ref{lm:ssw_forest_cost} in the analysis of the 3-approximation algorithm, we have:
    \begin{align*}
        \ce(\Falg) \le 2\sum_{S \subseteq V} \yS.
    \end{align*}
    We can complete the proof using the inequality from Lemma \ref{lm:optcost}:
    \begin{align*}
        \cost(\Alg) &\le \ce(\Falg) + \pi(\Dalg)\\
        &\le 2\sum_{S \subseteq V} \yS + \pi(\Dalg) \\
        &= 2\sum_{S \subseteq V} \yS + 2\ymult- 2\ymult + \pi(\Dalg) \\
        &\le 2\cost(\Opt) - 2\ymult + \pi(\Dalg) \tag{Lemma \ref{lm:optcost}}
    \end{align*}
\end{proof}

To analyze the cost of the solution $\Rec$, we first introduce the solution $\Solr$ below to establish an upper bound on the cost of the optimal solution for instance $\R$. 
Assuming by induction that our algorithm achieves a 2-approximate solution for instance $\R$, we then use this upper bound to bound the cost of $\Rec$ on the original instance $\I$.

\begin{definition}
Consider the forest $\Fopt$ selected by the optimal solution $\Opt$. 
For any set $S \in \Dsingle$, remove the unique edge in $\Fopt$ cut by $S$. 
Let $\Fr$ be the resulting forest and $\Dr = \pen(\Fr)$ denote the family for which the penalty needs to be paid for forest $\Fr$. 
We define $\Solr$ as the solution represented by the forest $\Fr$, which pays the penalty for family $\Dr$.
\end{definition}

We analyze the cost of $\Solr$ in the next two lemmas. 
First, we introduce a lower bound for the cost of the removed edges to analyze how the cost of the forest $\Fr$ compares to $\Fopt$.
Then, we show that removing these edges from $\Fopt$ does not affect the penalty that needs to be paid when considering the penalty function for instance $\R$. 
Note that the inequality in Lemma~\ref{lm:rforestpen} is in fact an equality, as $\Dopt \subseteq \closure(\Dr)$. 
However, proving the inequality is sufficient for our analysis.

\begin{lemma}\label{lm:rforestcost}
    The weight of the forest $\Fr$ is at most $\ce(\Fopt) - \ysingle$.
\end{lemma}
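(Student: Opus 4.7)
The plan is to upper-bound $\ce(\Fopt)-\ce(\Fr)$ by the total cost of the deleted edges and then show that this total cost is at least $\ysingle$, using Corollary~\ref{corollary:edge_constraint} as the sole quantitative tool.

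Concretely, for each $S\in\Dsingle$, Definition~\ref{def:single-mult} guarantees that $|\delta(S)\cap\Fopt|=1$, so there is a unique edge $e_S\in\Fopt$ with $e_S\in\delta(S)$. The forest $\Fr$ is obtained by deleting exactly the edges in the set $E^\star=\{e_S:S\in\Dsingle\}$ from $\Fopt$, and therefore $\ce(\Fopt)-\ce(\Fr)=\sum_{e\in E^\star}\ce(e)$. Note that different sets $S,S'\in\Dsingle$ may share the same $e_S$; the key point is that each $S\in\Dsingle$ contributes exactly one pair $(S,e_S)$, so grouping by $e$ partitions $\Dsingle$ according to which edge each set cuts.

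Next I would apply Corollary~\ref{corollary:edge_constraint} to every $e\in E^\star$: since any $S\in\Dsingle$ with $e_S=e$ has $e\in\delta(S)$, we get
\[
\sum_{S\in\Dsingle:\,e_S=e}\yS \;\le\; \sum_{S:\,e\in\delta(S)}\yS \;\le\; \ce(e).
\]
Summing this inequality over all $e\in E^\star$ and using the partition mentioned above,
\[
\ysingle=\sum_{S\in\Dsingle}\yS=\sum_{e\in E^\star}\sum_{S\in\Dsingle:\,e_S=e}\yS \;\le\; \sum_{e\in E^\star}\ce(e)=\ce(\Fopt)-\ce(\Fr),
\]
which rearranges to the claimed $\ce(\Fr)\le \ce(\Fopt)-\ysingle$.

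There is no serious obstacle here; the only point that needs a little care is the possibility that several distinct sets in $\Dsingle$ cut the same edge of $\Fopt$. This is handled cleanly by the edge constraint in Corollary~\ref{corollary:edge_constraint}, which already absorbs the combined $\yS$-mass of all sets adjacent to any single edge into that edge's cost, so no double-counting occurs even though only one copy of each shared edge is removed from $\Fopt$.
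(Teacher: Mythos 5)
Your proof is correct and uses essentially the same argument as the paper: both deduce the bound from Corollary~\ref{corollary:edge_constraint} applied to each deleted edge, exploiting that each $S\in\Dsingle$ cuts exactly one edge of $\Fopt$. The paper organizes the double sum slightly differently (summing over deleted edges, then over all adjacent sets, reordering, and then restricting to $\Dsingle$), while you group the sets of $\Dsingle$ by their unique cut edge before invoking the edge constraint, but these are two phrasings of the same calculation and both correctly handle the possibility that several sets of $\Dsingle$ share a deleted edge.
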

\begin{proof}
Forest $\Fr$ is obtained from $\Fopt$ by removing all edges that are cut by sets in 
$\Dsingle$. 
Let $\Er$ denote this set of edges.
We have
\begin{align*}
    \ce(\Er) &= \sum_{e\in\Er} \ce_e\\
    &\geq\sum_{e\in\Er} \sum_{\substack{S\subseteq\V\\ e\in\deltaS}} \yS \tag{Corollary \ref{corollary:edge_constraint}}\\
    &=\sum_{S\subseteq\V} \sum_{\substack{e\in\Er\\e\in\deltaS}} \yS \tag{Reordering terms}\\
    &=\sum_{S\subseteq\V} \yS\cdot\lvert \delta(S)\cap\Er\rvert  \\
    &\geq\sum_{S\in\Dsingle} \yS\cdot\lvert \delta(S)\cap\Er\rvert \tag{Restricting to $\Dsingle$}\\
    &=\sum_{S\in\Dsingle} \yS \tag{Definition of $\Er$}\\
    &=\ysingle. \tag{Definition \ref{def:single-mult}}
\end{align*}
Therefore $\ce(\Fr)=\ce(\Fopt) - \ce(\Er) \leq \ce(\Fopt) - \ysingle$.
\end{proof}

\begin{lemma} \label{lm:rforestpen}
    For instance $\R$, the penalty paid by $\Solr$ is no more than the penalty paid by $\Opt$:
    $$\pir(\Dr) \le \pir(\Dopt).$$
\end{lemma}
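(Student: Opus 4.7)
The plan is to exploit the closure structure of $\pen(\Fr)$. Specifically, I would show that every connected component of $\Fr$ can be obtained by set operations on elements of $\Dopt$ and $\Dsingle$, hence lies in $\closure(\Dopt \cup \Dsingle)$. Once this is established, monotonicity of $\pir$ (which is a valid penalty function by Lemma~\ref{lm:margsub}) together with Lemma~\ref{lm:closure} applied to $\pir$ yields $\pir(\Dr) \le \pir(\Dopt \cup \Dsingle)$. The conclusion then comes from the observation that $\Dsingle \subseteq \Dalg$ makes the marginal cost of adding $\Dsingle$ relative to $\Dalg$ vanish, so $\pir(\Dopt \cup \Dsingle) = \pir(\Dopt)$.

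First I would establish the containment $\cc(\Fr) \subseteq \closure(\Dopt \cup \Dsingle)$ by induction on the number of removed edges. Order the edges in $\Er = \Fopt \setminus \Fr$ arbitrarily as $e_1, \dots, e_k$, with each $e_i$ the unique edge of $\Fopt$ cut by some $S_i \in \Dsingle$. Let $\F_0 = \Fopt$ and $\F_i = \F_{i-1} \setminus \{e_i\}$. The base case uses Lemma~\ref{lm:compcl} to conclude $\cc(\F_0) \subseteq \pen(\Fopt) = \Dopt \subseteq \closure(\Dopt \cup \Dsingle)$. For the inductive step, since $\F_{i-1} \subseteq \Fopt$, we still have $\delta(S_i) \cap \F_{i-1} = \{e_i\}$. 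Let $C' \in \cc(\F_{i-1})$ be the component containing $e_i$; because $S_i$ cuts exactly this one edge of $\F_{i-1}$, removing $e_i$ splits $C'$ into the two pieces $C' \cap S_i$ and $C' \setminus S_i$. By induction hypothesis $C' \in \closure(\Dopt \cup \Dsingle)$ and $S_i \in \Dsingle \subseteq \closure(\Dopt \cup \Dsingle)$; since the closure is stable under intersection and set difference, both new components remain in it.

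With Step 1 in hand, I would combine the lemmas to push the penalty bound through. Using Lemma~\ref{lm:compcl}, $\Dr = \pen(\Fr) = \closure(\cc(\Fr))$, and Lemma~\ref{lm:closure} applied to $\pir$ gives $\pir(\Dr) = \pir(\cc(\Fr))$. Then monotonicity of $\pir$ together with the containment from Step 1 yields
\[
\pir(\Dr) \le \pir(\closure(\Dopt \cup \Dsingle)) = \pir(\Dopt \cup \Dsingle).
\]
Finally, since $\Dsingle \subseteq \Dalg$, unfolding the definition of $\pir$ gives
\[
\pir(\Dopt \cup \Dsingle) = \pi(\Dopt \cup \Dsingle \cup \Dalg) - \pi(\Dalg) = \pi(\Dopt \cup \Dalg) - \pi(\Dalg) = \pir(\Dopt),
\]
completing the proof.

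The main obstacle is Step 1, specifically ensuring that when edges are removed one at a time, each $S_i$ still cuts exactly one edge of the intermediate forest, and that the split of the affected component is cleanly described by $\cap S_i$ and $\setminus S_i$. This is essentially the same bookkeeping as in the pruning-phase analysis of Lemma~\ref{lm:ssw_penalty_le_Dm_penalty}, which gives confidence the argument will go through routinely.
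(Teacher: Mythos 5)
Your proposal is correct and follows essentially the same approach as the paper: both remove edges of $\Er$ one at a time, observe that removing the unique edge cut by a set $S_i \in \Dsingle$ splits the affected component $C'$ into $C' \cap S_i$ and $C' \setminus S_i$, and then appeal to closure under intersection/difference. The only cosmetic difference is bookkeeping — you show the fixed containment $\cc(\Fr) \subseteq \closure(\Dopt \cup \Dsingle)$ and finish with a single penalty computation using $\Dsingle \subseteq \Dalg$, whereas the paper iterates $\Dr^{i+1}\subseteq\closure(\Dr^i\cup\Dalg)$ and chains the inequalities $\pir(\Dr^{i+1})\le\pir(\Dr^i)$; both routes are sound and of comparable length.
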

\begin{proof}
Consider an arbitrary order for removing the edges $\Er$ to transform $\Fopt$ into $\Fr$. For $0\leq i \leq \lvert \Er \rvert$, let $\Fr^i$ be the forest after removing the first $i$ edges in this order with $\Fr^0=\Fopt$ and $\Fr^{\lvert \Er \rvert}=\Fr$ and $\Dr^i=\pen(\Fr^i)$. We will prove the desired statement by showing that for each $0\leq i <\lvert \Er \rvert$, $\pir(\Dr^{i+1})\leq\pir(\Dr^i)$.

First, note that since $\pir(\D)=\pi(\D\mid\Dalg)=\pi(\D\cup\Dalg) - \pi(\Dalg)$, to demonstrate that $\pir(\Dr^{i+1})\leq\pir(\Dr^i)$, it suffices to prove the equivalent condition $\pi(\Dr^{i+1}\cup\Dalg) \leq \pi(\Dr^i\cup\Dalg)$.
Now, assume we establish that $\Dr^{i+1}\subseteq \closure(\Dr^i\cup\Dalg)$.
Then, by Lemma \ref{lm:closure} and the monotonicity of $\pi$, we have 
\begin{align*}
\pi(\Dr^{i+1}\cup\Dalg) &\leq \pi(\Dr^{i+1}\cup\closure(\Dr^i\cup\Dalg)) \tag{Monotonicity of $\pi$}\\
&\leq\pi(\closure(\Dr^i\cup\Dalg)) \tag{Assuming $\Dr^{i+1}\subseteq \closure(\Dr^i\cup\Dalg)$}\\
&=\pi(\Dr^i\cup\Dalg) \tag{Lemma \ref{lm:closure}},
\end{align*}
which is our desired statement. Therefore, we focus on showing that $\Dr^{i+1}\subseteq \closure(\Dr^i\cup\Dalg)$.

Note that $\Dr^{i+1}$ is the $\closure$ of the family of the components of $\Fr^{i+1}$ by Lemma \ref{lm:compcl}. 
Thus, it suffices to show that each component of $\Fr^{i+1}$ is contained in $\closure(\Dr^i\cup\Dalg)$, i.e., $\cc(\Fr^{i+1}) \subseteq \closure(\Dr^i\cup\Dalg)$, which concludes that 
$$\Dr^{i+1}=\closure(\cc(\Fr^{i+1})) \subseteq \closure(\closure(\Dr^i\cup\Dalg)) = \closure(\Dr^i\cup\Dalg)\text{.}$$
The forest $\Fr^{i+1}$ is obtained from the forest $\Fr^{i}$ by removing an edge $e$, which splits some component $C$ containing this edge into two components $C_1$ and $C_2$. 
Firstly, note that for any other component $C'$ of $\Fr^{i+1}$ except $C_1$ and $C_2$, $C'$ is also a component of $\Fr^i$ and therefore contained in $\Dr^i$. 

Now, consider a set $S\in\Dsingle$ that cuts $e$ and take the intersection of this set with $C$. 
Since $S$ cuts $e$, it must contain exactly one of its endpoints. Assume that $S$ includes the endpoint of $e$ in $C_1$. Then, $S$ must include every vertex in $C_1$; otherwise, as $\Fopt$ restricted to $C_1$ is connected, there would be a second edge of $\Fopt$ cut by $S$, which contradicts $S$ being in $\Dsingle$. 
Similarly, no vertex in $C_2$ can be included in $S$. 
Therefore, $S\cap C=C_1$ and $C\setminus S=C_2$.
This is illustrated in Figure \ref{fig:remain-closure}.

Finally, since $S\in\Dalg$ and $C\in\Dr^i$, the components $C_1 = C \cap S$ and $C_2 = C\setminus S$ are included in $\closure(\Dr^i\cup\Dalg)$ due to this family being closed under intersection and set difference operations. This shows that $\Dr^{i+1}\subseteq \closure(\Dr^i\cup\Dalg)$, which completes the proof.
\end{proof}

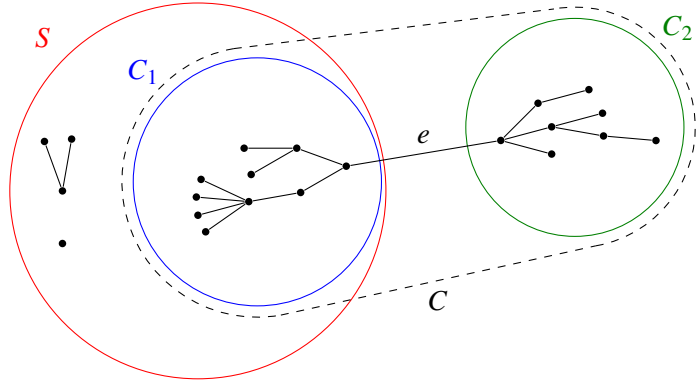
\begin{figure}[t]
    \centering
\begin{tikzpicture}[scale=1]

\def\gc{Black}
\def\es{0.7}

\node[Red, draw, circle, minimum size=5cm, label=above left:{\color{Red} $S$}] (S) at (0,0) {};
\node (S2) at (6,1) {};

\node[fill=\gc, circle, inner sep=1pt, label=85:] (v) at ($(S)!2cm!(S2)$){};
\node[fill=\gc, circle, inner sep=1pt, label=135:] (u) at ($(S2)!2cm!(S)$){};

\node[Blue, draw, circle, minimum size=3.3cm, label=above left:{\color{Blue} $C_1$}] (SP1) at ($(v) + (10:-1.2cm)$) {};
\node[Green, draw, circle, minimum size=2.9cm, label=above right:{\color{Green} $C_2$}] (SP2) at ($(u) + (10:1cm)$) {};

\draw[Black, dashed] ($(SP1) + (100: 1.8cm)$) arc[start angle=100, end angle=280, radius=1.8cm];
\draw[Black, dashed] ($(SP2) + (-80: 1.6cm)$) arc[start angle=-80, end angle=100, radius=1.6cm];
\draw[Black, dashed] ($(SP1) + (100: 1.8cm)$) -- ($(SP2) + (100: 1.6cm)$);
\draw[Black, dashed] ($(SP1) + (280: 1.8cm)$) -- node[below] {$C$} ($(SP2) + (-80: 1.6cm)$);

\node[fill=\gc, circle, inner sep=1pt] (a1) at ($(v) + (30: -\es)$) {};
\node[fill=\gc, circle, inner sep=1pt] (a2) at ($(v) + (-20: -\es)$) {};
\node[fill=\gc, circle, inner sep=1pt] (a3) at ($(a2) + (0: -\es)$) {};
\node[fill=\gc, circle, inner sep=1pt] (a4) at ($(a2) + (30: -\es)$) {};
\node[fill=\gc, circle, inner sep=1pt] (a5) at ($(a1) + (10: -\es)$) {};
\node[fill=\gc, circle, inner sep=1pt] (a6) at ($(a5) + (-25: -\es)$) {};
\node[fill=\gc, circle, inner sep=1pt] (a7) at ($(a5) + (-5: -\es)$) {};
\node[fill=\gc, circle, inner sep=1pt] (a8) at ($(a5) + (15: -\es)$) {};
\node[fill=\gc, circle, inner sep=1pt] (a9) at ($(a5) + (35: -\es)$) {};

\draw[\gc] (v) -- (a1);
\draw[\gc] (v) -- (a2);
\draw[\gc] (a2) -- (a3);
\draw[\gc] (a2) -- (a4);
\draw[\gc] (a1) -- (a5);
\draw[\gc] (a5) -- (a6);
\draw[\gc] (a5) -- (a7);
\draw[\gc] (a5) -- (a8);
\draw[\gc] (a5) -- (a9);

\node[fill=\gc, circle, inner sep=1pt] (b1) at ($(u) + (45: \es)$) {};
\node[fill=\gc, circle, inner sep=1pt] (b2) at ($(u) + (15: \es)$) {};
\node[fill=\gc, circle, inner sep=1pt] (b3) at ($(u) + (-15: \es)$) {};
\node[fill=\gc, circle, inner sep=1pt] (b4) at ($(b1) + (15: \es)$) {};
\node[fill=\gc, circle, inner sep=1pt] (b5) at ($(b2) + (15: \es)$) {};
\node[fill=\gc, circle, inner sep=1pt] (b6) at ($(b2) + (-10: \es)$) {};
\node[fill=\gc, circle, inner sep=1pt] (b7) at ($(b6) + (-5: \es)$) {};

\draw[\gc] (u) -- (b1);
\draw[\gc] (u) -- (b2);
\draw[\gc] (u) -- (b3);
\draw[\gc] (b1) -- (b4);
\draw[\gc] (b2) -- (b5);
\draw[\gc] (b2) -- (b6);
\draw[\gc] (b6) -- (b7);

\node[fill=\gc, circle, inner sep=1pt] (c1) at (-1.8, 0) {};
\node[fill=\gc, circle, inner sep=1pt] (c2) at ($(c1) + (110: \es)$) {};
\node[fill=\gc, circle, inner sep=1pt] (c3) at ($(c1) + (80: \es)$) {};
\node[fill=\gc, circle, inner sep=1pt] (c4) at ($(c1) + (-90: \es)$) {};

\draw[\gc] (c1) -- (c2);
\draw[\gc] (c1) -- (c3);

\draw[\gc] (v) -- (u) node[black, midway, above] {$e$};
\end{tikzpicture}
    \caption{An illustration of the steps in Lemma \ref{lm:rforestpen}. When removing the sole edge $e$ cut by set $\redS$, component $\blackC$ is split into components $\blueCone$ and $\greenCtwo$. It can be seen that any vertex in $\greenCtwo$ being in $\redS$ or any vertex in $\blueCone$ being outside $\redS$ will lead to another edge cut by $\redS$. Therefore, $\redS \cap \blackC$ will be $\blueCone$, as shown here.}
    \label{fig:remain-closure}
\end{figure}

Now, we use the solution $\Solr$ to bound the optimal solution of instance $\R$, which costs less than $\Solr$ due to its optimality.
Note that we use $\costr$ to denote the cost of a solution for instance $\R$.

\begin{lemma} \label{lm:costrr}
    Let $\Optr$ be an optimal solution for instance $\R$. Then, we have 
    $$\costr(\Optr) \le \cost(\Opt) - (\pi(\Dopt) - \pi(\Dopt\mid\Dalg)) - \ysingle.$$
\end{lemma}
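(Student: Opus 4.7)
The plan is to exploit the optimality of $\Optr$ together with the fact that $\Solr$ is a feasible candidate solution for instance $\R$, and then plug in the two bounds we already established for $\Solr$ in Lemma~\ref{lm:rforestcost} and Lemma~\ref{lm:rforestpen}. The whole argument is essentially a two-line calculation once the right objects are lined up.

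First I would write down the inequality $\costr(\Optr) \le \costr(\Solr)$, which holds because $\Optr$ is an optimal solution for instance $\R$ while $\Solr$ (using the forest $\Fr$ and paying penalty $\pir(\Dr)$) is a valid solution. By definition $\costr(\Solr)=\ce(\Fr)+\pir(\Dr)$. Next, I would substitute Lemma~\ref{lm:rforestcost}, which gives $\ce(\Fr)\le\ce(\Fopt)-\ysingle$, and Lemma~\ref{lm:rforestpen}, which gives $\pir(\Dr)\le\pir(\Dopt)$. Unfolding $\pir(\Dopt)=\pi(\Dopt\mid\Dalg)$ by the definition of $\pir$ in Line~\ref{line:pir} of Algorithm~\ref{alg:pcf2}, this yields
\[
\costr(\Optr)\le\ce(\Fopt)+\pi(\Dopt\mid\Dalg)-\ysingle.
\]

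Finally I would massage the right-hand side into the desired form by adding and subtracting $\pi(\Dopt)$:
\[
\ce(\Fopt)+\pi(\Dopt\mid\Dalg)-\ysingle = \bigl(\ce(\Fopt)+\pi(\Dopt)\bigr)-\bigl(\pi(\Dopt)-\pi(\Dopt\mid\Dalg)\bigr)-\ysingle,
\]
and recognize the first parenthesis as $\cost(\Opt)$, yielding the claimed inequality.

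There is no real obstacle here; the lemma is essentially a packaging step that combines the two technical bounds (Lemmas~\ref{lm:rforestcost} and \ref{lm:rforestpen}) with the trivial optimality inequality. The only thing to be careful about is keeping $\pi$ and $\pir$ straight: the cost of $\Optr$ is measured with $\pir$, but the final expression is written in terms of $\pi$, so the substitution $\pir(\Dopt)=\pi(\Dopt\mid\Dalg)$ must be made explicit. Note that by monotonicity of $\pi$, the quantity $\pi(\Dopt)-\pi(\Dopt\mid\Dalg)$ is non-negative, which is a useful sanity check that the bound strictly improves upon simply using $\Fopt$ as the recursive solution.
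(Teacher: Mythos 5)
Your proposal matches the paper's proof exactly: both bound $\costr(\Optr)\le\costr(\Solr)$ by optimality, then substitute Lemma~\ref{lm:rforestpen} and Lemma~\ref{lm:rforestcost}, unfold the definition of $\pir$, and rewrite $\ce(\Fopt)$ as $\cost(\Opt)-\pi(\Dopt)$. The closing sanity-check observation about monotonicity is a nice addition but not needed for the argument.
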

\begin{proof}
Consider the solution $\Solr$ along with its associated forest $\Fr$ and family $\Dr$. We can use this solution to upper bound the cost of $\Optr$ as follows:
\begin{align*}
    \costr(\Optr) &\leq \costr(\Solr) \tag{$\Optr$ is optimal for instance $\R$}\\&= \ce(\Fr) + \pir(\Dr)\\
    &\leq\ce(\Fr) + \pir(\Dopt) \tag{Lemma \ref{lm:rforestpen}}\\
    &=\ce(\Fr) + \pi(\Dopt\mid\Dalg) \tag{Definition of $\pir$}\\
    &\leq\ce(\Fopt) - \ysingle + \pi(\Dopt\mid\Dalg) \tag{Lemma \ref{lm:rforestcost}}\\
    &=(\cost(\Opt) - \pi(\Dopt)) - \ysingle + \pi(\Dopt\mid\Dalg) \tag{$\cost(\Opt)=\ce(\Fopt)+\pi(\Dopt)$}\\
    &=\cost(\Opt) - (\pi(\Dopt) - \pi(\Dopt\mid\Dalg)) - \ysingle.
\end{align*}
\end{proof}

We use the following lemma to bound the cost of the solution $\Rec$ on instance $\I$ based on its cost on instance $\R$.

\begin{lemma}\label{lm:costrbound}
    For any solution $\Sol$, 
    $$
        \cost(\Sol) \leq \costr(\Sol) + \pi(\Dalg).
    $$
\end{lemma}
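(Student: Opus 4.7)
The plan is to unpack both sides using the definitions of $\cost$, $\costr$, and $\pir$, and then reduce the inequality to a direct application of the monotonicity of $\pi$. Specifically, let $\F_{\Sol}$ be the forest of the solution $\Sol$. By definition, $\cost(\Sol) = \ce(\F_{\Sol}) + \pi(\pen(\F_{\Sol}))$ and $\costr(\Sol) = \ce(\F_{\Sol}) + \pir(\pen(\F_{\Sol}))$. Subtracting, the edge-cost terms cancel, so the claim reduces to showing
\[
\pi(\pen(\F_{\Sol})) \leq \pir(\pen(\F_{\Sol})) + \pi(\Dalg).
\]

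Next I would substitute the definition $\pir(\family) = \pi(\family \mid \Dalg) = \pi(\family \cup \Dalg) - \pi(\Dalg)$ from Definition~\ref{def:marg} with $\family = \pen(\F_{\Sol})$. The right-hand side becomes $\pi(\pen(\F_{\Sol}) \cup \Dalg)$, so the desired inequality is simply
\[
\pi(\pen(\F_{\Sol})) \leq \pi(\pen(\F_{\Sol}) \cup \Dalg),
\]
which is immediate from the monotonicity of $\pi$, since $\pen(\F_{\Sol}) \subseteq \pen(\F_{\Sol}) \cup \Dalg$.

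There is essentially no obstacle here; the lemma is a short bookkeeping argument that captures the intuition that switching from the original penalty $\pi$ to the marginal penalty $\pir$ can only decrease the penalty paid by any fixed solution, and the discrepancy is at most $\pi(\Dalg)$. The only point worth being careful about is to write the chain of equalities as a single display (using \texttt{align*}) so that the cancellation of $\pi(\Dalg)$ from both sides is transparent, and to cite Definition~\ref{def:marg} and the monotonicity property of $\pi$ at the appropriate step.
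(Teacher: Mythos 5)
Your argument is correct and follows essentially the same route as the paper: unpack $\cost$ and $\costr$, substitute the definition $\pir(\family) = \pi(\family \cup \Dalg) - \pi(\Dalg)$, cancel the edge-cost terms, and invoke monotonicity of $\pi$. The only cosmetic difference is that you reduce to the target inequality and close it with monotonicity, whereas the paper runs the chain forward starting from $\costr(\Sol)$; the content is identical.
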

\begin{proof}
Using the definition of $\costr$, we can write
\begin{align*}
    \costr(\Sol) &= \ce(\F_\Sol) + \pir(\D_\Sol)\\
    &=\ce(\F_\Sol) + \pi(\D_\Sol\mid\Dalg)\\
    &=\ce(\F_\Sol) + \pi(\D_\Sol\cup\Dalg)-\pi(\Dalg) \tag{Definition \ref{def:marg}}\\
    &\geq\ce(\F_\Sol) + \pi(\D_\Sol)-\pi(\Dalg) \tag{Monotonicity of $\pi$}\\
    &=\cost(\Sol)-\pi(\Dalg).
\end{align*}
\end{proof}

Next, we relate $\ysingle$ and $\ymult$ so we can give an upper bound for $\Rec$ in terms of $\ymult$, which is consistent with our upper bound for $\Alg$.

\begin{lemma} \label{lm:b1pb2}
    We have 
    $$\ysingle + \ymult \geq \pi(\Dalg\mid\Dopt).$$
\end{lemma}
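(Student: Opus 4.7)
The plan is to route $\pi(\Dalg\mid\Dopt)$ through the intermediate quantity $\pi(\Dunpaid\mid\Dpaid)$ and then cash that in using the tightness of $\Dalg$ (Lemma~\ref{lm:tight}) to arrive at $\sum_{S\in\Dunpaid}\yS=\ysingle+\ymult$. The conceptual point is that the contribution of $\Dpaid$ to the marginal over $\Dopt$ vanishes by the very definition of $\Dpaid$, so $\Dpaid$ can be absorbed into the conditioning family, after which the tight dual constraint on $\Dalg=\Dpaid\cup\Dunpaid$ becomes available.

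The first step is to show $\pi(\Dpaid\cup\Dopt)=\pi(\Dopt)$. The marginal $\pi(\cdot\mid\Dopt)$ is submodular by Lemma~\ref{lm:margsub}, hence subadditive by Lemma~\ref{lm:subadd}, so $\pi(\Dpaid\mid\Dopt)\le\sum_{S\in\Dpaid}\pi(\{S\}\mid\Dopt)=0$, where each individual marginal is zero by Definition~\ref{def:paid}; monotonicity forces equality. Writing $\Dalg=\Dpaid\cup\Dunpaid$ and expanding $\pi(\Dalg\cup\Dopt)=\pi(\Dunpaid\cup(\Dpaid\cup\Dopt))=\pi(\Dunpaid\mid\Dpaid\cup\Dopt)+\pi(\Dpaid\cup\Dopt)$, then subtracting $\pi(\Dopt)=\pi(\Dpaid\cup\Dopt)$ from both sides, yields the clean identity $\pi(\Dalg\mid\Dopt)=\pi(\Dunpaid\mid\Dpaid\cup\Dopt)$.

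Next, since $\Dpaid\subseteq\Dpaid\cup\Dopt$, Lemma~\ref{lm:submod} gives $\pi(\Dunpaid\mid\Dpaid\cup\Dopt)\le\pi(\Dunpaid\mid\Dpaid)$. By the definition of marginal cost and $\Dalg=\Dpaid\cup\Dunpaid$, the right-hand side equals $\pi(\Dalg)-\pi(\Dpaid)$. Tightness of $\Dalg$ (Lemma~\ref{lm:tight}) replaces $\pi(\Dalg)$ by $\sum_{S\in\Dalg}\yS$, and the dual feasibility from Corollary~\ref{corollary:family_constraint} gives $\pi(\Dpaid)\ge\sum_{S\in\Dpaid}\yS$. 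Combining, $\pi(\Dunpaid\mid\Dpaid)\le\sum_{S\in\Dalg}\yS-\sum_{S\in\Dpaid}\yS=\sum_{S\in\Dunpaid}\yS=\ysingle+\ymult$, and chaining the inequalities concludes the proof.

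The main subtlety will be arriving at the identity $\pi(\Dalg\mid\Dopt)=\pi(\Dunpaid\mid\Dpaid\cup\Dopt)$ rather than the naive detour $\pi(\Dalg\mid\Dopt)\le\pi(\Dunpaid\mid\Dopt)\le\pi(\Dunpaid)$, which loses too much because tightness pins down $\pi(\Dalg)$ and not $\pi(\Dunpaid)$. Absorbing $\Dpaid$ into the conditioning family and only then invoking Lemma~\ref{lm:submod} to swap $\Dpaid\cup\Dopt$ for $\Dpaid$ is precisely what lines up the expression with the tight dual constraint on $\Dalg$.
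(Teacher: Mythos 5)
Your proposal matches the paper's proof step for step: establishing $\pi(\Dpaid\cup\Dopt)=\pi(\Dopt)$ via subadditivity of the marginal (Lemmas~\ref{lm:margsub} and~\ref{lm:subadd}), rewriting $\pi(\Dalg\mid\Dopt)$ as $\pi(\Dunpaid\mid\Dpaid\cup\Dopt)$, applying Lemma~\ref{lm:submod} to drop $\Dopt$ from the conditioning, and then invoking tightness (Lemma~\ref{lm:tight}) together with Corollary~\ref{corollary:family_constraint} to reach $\sum_{S\in\Dunpaid}\yS$. This is correct and essentially the same argument as in the paper.
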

\begin{proof}
    By Definition \ref{def:single-mult}, $\ysingle + \ymult = \sum_{S\in\Dunpaid} \yS$, so we are looking to show that $$\sum_{S\in\Dunpaid} \yS \geq \pi(\Dalg\mid\Dopt).$$
    
    We know by Definition \ref{def:paid} that for each set $S \in \Dpaid$, $\pi(\{S\}\mid\Dopt)=0$. Then, the submodularity of $\pi(\ \cdot\mid\Dopt)$ along with Lemma \ref{lm:subadd} implies that $\pi(\Dpaid\mid\Dopt)=0$. This means that $$\pi(\Dpaid\cup\Dopt)=\pi(\Dopt).$$
    So, we can conclude that
    \begin{align*}
        \pi(\Dalg\mid\Dopt) &= \pi(\Dalg\cup\Dopt) - \pi(\Dopt)\\
        &=\pi(\Dalg\cup\Dopt) - \pi(\Dpaid\cup\Dopt)\\
        &=\pi(\Dunpaid\cup\Dpaid\cup\Dopt) - \pi(\Dpaid\cup\Dopt)\tag{Definition~\ref{def:paid}}\\
        &=\pi(\Dunpaid\mid\Dpaid\cup\Dopt)\\
        &\leq\pi(\Dunpaid\mid\Dpaid) \tag{Lemma \ref{lm:submod}}\\
        &= \pi(\Dunpaid\cup\Dpaid) - \pi(\Dpaid)\\
        &= \pi(\Dalg) - \pi(\Dpaid) \tag{Definition~\ref{def:paid}}\\
        &\leq \pi(\Dalg) - \sum_{S\in\Dpaid}\yS \tag{Corollary \ref{corollary:family_constraint}}\\
        &= \sum_{S\in\Dalg}\yS - \sum_{S\in\Dpaid}\yS \tag{Lemma \ref{lm:tight}}\\
        &= \sum_{S\in\Dunpaid} \yS\text{.} \tag{Definition~\ref{def:paid}}
    \end{align*}
\end{proof}

Finally, we use induction on the recursive steps of the algorithm to prove that \PCF{} returns a 2-approximate solution. The base case of our induction is when \PCF{} does not recursively call itself. 
In the induction step, we show that if the recursive call returns a 2-approximate solution for the new instance $\R$, then the minimum solution between $\SSW$ and $\Rec$ is a 2-approximate solution for instance $\I$. 
The following lemma is crucial for the induction step as it helps bound the cost of $\Rec$ using the induction hypothesis. 
After establishing this lemma, we proceed to prove the approximation guarantee in Theorem~\ref{theorem:main_theorem}.

\begin{lemma} \label{lm:rbound}
    If a solution $\Sol$ is a 2-approximate solution for instance $\R$, we have
    $$\cost(\Sol) \le 2\cost(\Opt) + 2\ymult - \pi(\Dalg)$$
    for instance $\I$.
\end{lemma}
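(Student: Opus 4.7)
The plan is to chain three earlier results---Lemma~\ref{lm:costrbound}, the induction hypothesis, and Lemma~\ref{lm:costrr}---and then close the gap using Lemma~\ref{lm:b1pb2}. First I would pass from instance $\I$ to instance $\R$ via
$$\cost(\Sol) \leq \costr(\Sol) + \pi(\Dalg),$$
which is exactly Lemma~\ref{lm:costrbound}. Next I would invoke the hypothesis that $\Sol$ is $2$-approximate on $\R$, giving $\costr(\Sol) \leq 2\costr(\Optr)$, and substitute the upper bound on $\costr(\Optr)$ from Lemma~\ref{lm:costrr} to obtain
$$\cost(\Sol) \leq 2\cost(\Opt) - 2\bigl(\pi(\Dopt) - \pi(\Dopt\mid\Dalg)\bigr) - 2\ysingle + \pi(\Dalg).$$

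After this stage, the target inequality reduces (after rearranging) to
$$\pi(\Dalg) - \bigl(\pi(\Dopt) - \pi(\Dopt\mid\Dalg)\bigr) \leq \ysingle + \ymult.$$
The key observation, which I expect to be the crux of the argument, is that the left-hand side is nothing but $\pi(\Dalg\mid\Dopt)$. This follows by unfolding the definition of marginal cost symmetrically: both $\pi(\Dopt) - \pi(\Dopt\mid\Dalg)$ and $\pi(\Dalg) - \pi(\Dalg\mid\Dopt)$ equal $\pi(\Dopt) + \pi(\Dalg) - \pi(\Dopt \cup \Dalg)$, so they are equal to each other, which gives the claimed identity.

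With that identity in hand, the remaining inequality $\pi(\Dalg\mid\Dopt) \leq \ysingle + \ymult$ is precisely the content of Lemma~\ref{lm:b1pb2}, which completes the proof. The only subtle point is the marginal-cost symmetry step above; every other manipulation is a direct substitution, so once that identity is stated cleanly, the proof is essentially a short computation.
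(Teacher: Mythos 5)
Your proof is correct and follows essentially the same route as the paper's: both chain Lemma~\ref{lm:costrbound}, the 2-approximation hypothesis, Lemma~\ref{lm:costrr}, and Lemma~\ref{lm:b1pb2}, and both exploit the marginal-cost symmetry $\pi(\Dopt) - \pi(\Dopt\mid\Dalg) = \pi(\Dalg) - \pi(\Dalg\mid\Dopt)$ (the paper performs this cancellation inline by expanding both marginals, while you isolate it as an explicit identity before invoking Lemma~\ref{lm:b1pb2}). The only difference is the order in which Lemma~\ref{lm:b1pb2} is applied, which is cosmetic.
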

\begin{proof}
First, we note that by Lemma \ref{lm:b1pb2} $$\ysingle \geq \pi(\Dalg|\Dopt) - \ymult.$$
Now, we can bound the cost of the solution $\Sol$ in instance $\R$ using the cost of the optimal solution in instance $\I$.
\begin{align*}
    \costr(\Sol) \leq{}& 2\costr(\Optr) \tag{Lemma's assumption}\\ \leq{}&2(\cost(\Opt) - (\pi(\Dopt) - \pi(\Dopt\mid\Dalg)) - \ysingle)\tag{Lemma \ref{lm:costrr}}\\
    \leq{}&2(\cost(\Opt) - \pi(\Dopt) + \pi(\Dopt\mid\Dalg) - \pi(\Dalg|\Dopt) + \ymult)\tag{Lemma \ref{lm:b1pb2}}\\
    ={}&2(\cost(\Opt) - \pi(\Dopt) + (\pi(\Dopt\cup\Dalg) - \pi(\Dalg))\\& - (\pi(\Dalg\cup\Dopt) - \pi(\Dopt)) + \ymult)\\
    ={}&2(\cost(\Opt) - \pi(\Dalg) + \ymult) \text{.}
\end{align*}
Finally, we can bound the cost of the solution $\Sol$ in instance $\I$. 
\begin{align*}
    \cost(\Sol) &\leq \costr(\Sol) + \pi(\Dalg) \tag{Lemma~\ref{lm:costrbound}}\\
    &\leq 2(\cost(\Opt) - \pi(\Dalg) + \ymult) + \pi(\Dalg)\\
    &= 2\cost(\Opt) + 2\ymult - \pi(\Dalg).
\end{align*}
\end{proof}

Finally, we combine the upper bounds for the two solutions $\SSW$ and $\Rec$ to establish the approximation ratio of the solution returned by Algorithm \ref{alg:pcf2}. We note that we assume the algorithm terminates in the proof of Theorem \ref{theorem:main_theorem}, before formally proving this in Theorem \ref{theorem:main_time_complexity}.
\begin{theorem}
\label{theorem:main_theorem}
    The solution $\Sol$ returned by the algorithm \PCF{} has a cost of at most $2\cost(\Opt)$.
\end{theorem}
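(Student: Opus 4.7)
The plan is to proceed by induction on the recursion depth of \PCF{}, assuming termination (to be proved separately in Theorem~\ref{theorem:main_time_complexity}). In the base case, the algorithm stops at Line~\ref{line:return-early} because $\pi(\Dalg)=0$, and it returns $\Falg$. Applying Lemma~\ref{lm:algcost} immediately gives $\cost(\Alg) \le 2\cost(\Opt) - 2\ymult + \pi(\Dalg) \le 2\cost(\Opt)$, since $\ymult \ge 0$ and $\pi(\Dalg)=0$.

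For the inductive step, assume that the recursive call on instance $\R$ returns a 2-approximate solution with respect to $\pir$. Then by Lemma~\ref{lm:rbound}, applied to $\Frec$ as a solution for instance $\I$, I get $\cost(\Rec) \le 2\cost(\Opt) + 2\ymult - \pi(\Dalg)$. On the other hand, Lemma~\ref{lm:algcost} gives $\cost(\Alg) \le 2\cost(\Opt) - 2\ymult + \pi(\Dalg)$. Summing these two bounds, the $\ymult$ and $\pi(\Dalg)$ terms cancel:
\begin{equation*}
\cost(\Alg) + \cost(\Rec) \le 4\cost(\Opt).
\end{equation*}
Therefore $\min(\cost(\Alg), \cost(\Rec)) \le 2\cost(\Opt)$, and since the algorithm returns whichever of $\Falg, \Frec$ has smaller cost (Lines~\ref{line:return-min}--\ref{line:return_Frec}), the returned solution $\Sol$ satisfies $\cost(\Sol) \le 2\cost(\Opt)$, completing the induction.

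The proof itself is short because all the heavy lifting has already been carried out in the preceding lemmas, which set up two complementary bounds whose sum is independent of $\ymult$ and $\pi(\Dalg)$. The conceptual obstacle that was overcome earlier is exactly this balancing: when $\ymult$ is large, many sets in $\Dalg$ cut multiple edges of $\Fopt$ and thereby tighten the lower bound on $\cost(\Opt)$ used by $\Alg$; when $\ymult$ is small, Lemma~\ref{lm:b1pb2} forces $\ysingle$ to absorb most of $\pi(\Dalg\mid\Dopt)$, which through Lemmas~\ref{lm:rforestcost}, \ref{lm:rforestpen}, and \ref{lm:costrr} bounds the cost of the recursive instance. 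So in the theorem itself, the only nontrivial step is observing this cancellation and taking the minimum of the two solutions.
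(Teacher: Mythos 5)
Your proof is correct and follows essentially the same route as the paper: induction on the recursion, base case via Lemma~\ref{lm:algcost} with $\pi(\Dalg)=0$, and in the inductive step combining the complementary bounds from Lemma~\ref{lm:algcost} and Lemma~\ref{lm:rbound}, whose $\ymult$ and $\pi(\Dalg)$ terms cancel when averaged, so that the minimum of the two solutions is a 2-approximation.
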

\begin{proof}
    We prove this by induction on the recursive tree of the algorithm.
    In the base case, which is the last call to the \PCF{} the algorithm returns the solution $\SSW$ without making a recursive call. In this case, $\pi(\Dalg)=0$, so the algorithm does not pay any penalties. Then, by Lemma \ref{lm:algcost}, we have
    \begin{align*}
    \cost(\Sol) =\cost(\Alg) \leq 2\cost(\Opt) - 2 \ymult  + \pi(\Dalg) \leq  2\cost(\Opt).
    \end{align*}

    Now, consider the case where we also obtain a recursive solution $\Rec$. In this case, by Lemma \ref{lm:algcost}, we have the following bound for the non-recursive solution $\Alg$
    $$
    \cost(\Alg) \leq 2\cost(\Opt) - 2 \ymult  + \pi(\Dalg).
    $$
    By induction, the recursive solution $\Rec$ is a 2-approximation for instance $\R$. So, by Lemma \ref{lm:rbound} we have
    $$
    \cost(\Rec) \leq 2\cost(\Opt) + 2 \ymult  -\pi(\Dalg).
    $$
    Combining the two bounds, we can see that
    \begin{align*}
    \cost(\Sol)&=\min(\cost(\Alg),\cost(\Rec))\\
    & \leq \frac{\cost(\Alg)+\cost(\Rec)}{2}\\
    &\leq \frac{2\cost(\Opt) - 2 \ymult  + \pi(\Dalg) + 2\cost(\Opt) + 2 \ymult  - \pi(\Dalg)}{2}\\
    &\leq 2\cost(\Opt).
    \end{align*}
\end{proof}
Lastly, we show that our algorithm operates in polynomial time.

\begin{theorem} \label{theorem:main_time_complexity}
    Algorithm \ref{alg:pcf2} always terminates and runs in polynomial time.    
\end{theorem}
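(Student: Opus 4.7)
The plan is to prove termination and polynomial running time by combining polynomial per-level work with a polynomial bound on the recursion depth.

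For the per-level work, each call to $\PCF$ invokes $\SSW$ once, which runs in polynomial time by Lemma~\ref{lm:ssw_poly_time}. Evaluating the marginal penalty $\pir(\family) = \pi(\family \mid \Dalg)$ costs only two oracle queries to $\pi$ by Definition~\ref{def:marg}, so $\SSW$ applied to the recursive instance $\R$ also runs in polynomial time, since its inner subroutines (submodular minimization, etc.) use only polynomially many queries to the relevant penalty function. Computing $\cost(\Alg)$ and $\cost(\Rec)$ requires extracting $\pen(\F)$ from a forest, which is polynomial in $|\V|$ by enumerating its connected components and applying Lemma~\ref{lm:compcl}, plus a single oracle query to $\pi$. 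Hence each recursive level performs only polynomially many basic operations.

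For the recursion depth, the recursion terminates whenever $\pi(\Dalg) = 0$. Let $\D^k$ denote the union of the tight families returned by $\SSW$ at all levels prior to depth $k$. By iterating Lemma~\ref{lm:union_marginal_cost}, the penalty function at depth $k$ equals $\pi(\cdot \mid \D^k)$, and the recursion continues only when $\pi(\Dalg \mid \D^k) > 0$, equivalently $\Dalg \not\subseteq \closure(\D^k)$. Thus each recursive step strictly enlarges $\closure(\D^k)$. My plan is to combine this strict growth with the structural observation that each $\Dalg$ is a collection of sets that arose as connected components during a single $\SSW$ coloring process, of which there are at most $O(|\V|)$ per call, to extract an integer potential bounded polynomially in $|\V|$ that strictly decreases at each recursion.

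The hard part is making this potential argument rigorous. A naive bound on $|\closure(\D^k)|$ is exponential in $|\V|$, so a finer structural argument is needed. The crucial insight I would exploit is that every set entering some $\Dalg$ emerges from the polynomial-size family of components produced by $\SSW$ at its level, and once $\D^k$ absorbs such a set, its closure becomes free in all subsequent $\SSW$ invocations (any family drawn from $\closure(\D^k)$ is immediately tight in the new penalty function), preventing later levels from replaying the same structure. Turning this observation into a polynomial bound on the depth is the main technical challenge; with it in hand, termination and polynomial total running time follow immediately from the per-level analysis.
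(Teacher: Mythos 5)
Your per-level analysis is correct and matches the paper: each recursive call does polynomial work since $\SSW$ is polynomial (Lemma~\ref{lm:ssw_poly_time}), and by Lemma~\ref{lm:union_marginal_cost} the penalty function at any depth is of the form $\pi(\,\cdot \mid \Dbase)$ for a single union $\Dbase$ of previously discovered tight families, so it can be evaluated with two queries to the original oracle provided $\Dbase$ stays polynomial in size.

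The genuine gap is the recursion-depth bound, which you explicitly flag but do not resolve. Observing that each non-final step strictly enlarges $\closure(\D^k)$ is correct but insufficient, as you acknowledge, because $|\closure(\D^k)|$ may be exponential. Your proposed remedy — that each $\Dalg$ comes from a polynomial-size family of $\SSW$ components, and that absorbing it ``prevents later levels from replaying the same structure'' — does not yield a polynomial potential: the space of distinct closed set-systems on $\V$ is doubly exponential, and the fact that each $\Dalg$ is individually small does not bound how many distinct levels can contribute something new. The paper's key move is to extract from $\closure(\Dbase)$ the family $\Mbase$ of its \emph{minimal} non-empty members, prove that $\Mbase$ is always a partition of $\V$ (so $|\Mbase| \le |\V|$), and then show that whenever $\pi(\Dalg \mid \Dbase) > 0$, some $S \in \Dalg$ straddles a block $S' \in \Mbase$ (else $S$ would lie in $\closure(\Mbase) = \closure(\Dbase)$ and contribute nothing); splitting $S'$ into $S \cap S'$ and $S' \setminus S$ then forces the next level's $\Mbase$ to have at least one more block. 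The partition is exactly the integer potential you were looking for, and its existence is not obvious from your observations; you would need to discover it and prove both the partition property and the strict-growth property to close the argument.
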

\begin{proof}
    First, we note that each recursive step runs in polynomial time. The runtime for each step is dominated by the call to the 3-approximation algorithm $\SSW$, which runs in polynomial time by Lemma \ref{lm:ssw_poly_time}. 
    It's important to note that this runtime depends on having polynomial time access to an oracle for the input penalty function $\pi'$ . 
    This access is guaranteed for the initial penalty function $\pi$, but the penalty function used changes throughout our algorithm. 
    However, based on Lemma~\ref{lm:union_marginal_cost}, at any step of the algorithm, the penalty function $\pi'(\D)$ can be thought of as $\pi(\D\mid\Dbase)$, where $\pi$ is the penalty function from the initial call and $\Dbase$ is the union of all the tight families $\Dalg$ used to define $\pir$ functions so far. 
    This value can be computed using two calls to the oracle for the original penalty function $\pi$, so, assuming the size of $\Dbase$ remains polynomial, the runtime of each step will be polynomial too. 
    Next, we will show that the recursion depth of the algorithm is polynomial. This, along with the fact that in each recursion the number of elements added to $\Dbase$ is polynomial, guarantees that $\Dbase$ will have a polynomial size and polynomial-time access to the penalty function at each step is maintained.

        \begin{figure}[t]
        \centering
        \begin{tikzpicture}[scale=1.2]

\def\B{Blue}
\def\R{Green}
\def\G{Red}
\def\xmargin{0.7}
\def\ymargin{0.2}
\def\rad{1}

\def\d{60}
\def\dd{7}
\def\ddd{4}
\def\sr{\rad - 0.1}
\def\sbr{\sr + 0.1}

\def\srad{0.5}
\def\ssrad{\srad + 0.1}
\def\nd{73}
\def\ndp{32.7}
\def\ndd{11}
\def\nddd{2.3}
\def\nnd{22}
\def\dddd{2}
\def\ndddd{3}

\foreach \x in {0, 1}
\foreach \y in {0, 1, 2} {
    \coordinate (center\x\y) at ($(\x * \rad * 2 + \x * \xmargin, \y * \rad * 2 + \y * \ymargin)$);
    
    \coordinate (top1r\x\y) at ($(center\x\y) + (\d:\sr)$);

    \coordinate (top1l\x\y) at ($(center\x\y) + (\d+\dd:\sr)$); 

    \coordinate (bot1r\x\y) at ($(center\x\y) + (-\d:\sr)$);

    \coordinate (bot1l\x\y) at ($(center\x\y) + (-\d-\dd:\sr)$); 

    \coordinate (top3r\x\y) at ($(center\x\y) + (\nd:\srad)$);
    
    \coordinate (top3l\x\y) at ($(center\x\y) + (\nd+\ndd:\srad)$);

    \coordinate (top2r\x\y) at ($(center\x\y) + (\nd - \ndddd:\ssrad)$);
    
    \coordinate (top2l\x\y) at ($(center\x\y) + (\nd+\ndd - \ndddd:\ssrad)$);
}

\foreach \x/\y/\c in {0/0/\B, 0/1/\B, 0/2/\B, 1/2/\R} {
    \draw[\c] (center\x\y) circle (\rad);
};

\foreach \x/\y/\c in {0/0/\B, 0/1/\B, 1/1/\R, 1/0/\R} {
    \draw[\c] (top1r\x\y) arc[start angle=\d, end angle=-\d, radius=\sr];
};
\foreach \x/\y/\c in {0/1/\B,  1/1/\R} {
    \draw[\c] (top1r\x\y) arc[start angle=180 - \d, end angle=180 + \d, radius=\sr];
};

\def\f{61.5}
\def\fr{\sr+0.4}
\foreach \x/\y/\c in {0/1/\G} {
    \coordinate (top1) at ($(center\x\y) + (\f:\sr+0.05)$);
    \draw[\c, very thick] (top1) arc[start angle=180 - \f, end angle=180 + \f, radius=\sr+0.05];
    \draw[\c, very thick] (top1) arc[start angle=\f, end angle=-\f, radius=\sr + 0.05];
};

\foreach \x/\y/\c in {0/1/\B, 1/1/\R} {
    \draw[\c] (top1l\x\y) arc[start angle=180 - \d + \ddd, end angle=180 + \d - \ddd, radius=\sbr];
}
\foreach \x/\y/\c in {0/0/\B, 0/1/\B, 1/1/\R, 1/0/\R} {
    \draw[\c] (top1l\x\y) arc[start angle=\d + \dd, end angle=360 -\d - \dd, radius=\sr];
};

\foreach \x/\y/\c in {0/0/\G} {
    \draw[\c, very thick] (center\x\y) circle (\srad + 0.05);
}

\foreach \x/\y/\c in {0/0/\B, 1/0/\R} {
    
    \draw[\c] (top1r\x\y) arc[start angle=180 - \d, end angle=180 - \d + \nnd - \dddd, radius=\sr];

    \draw[\c] (bot1r\x\y) arc[start angle=180 + \d, end angle=180 + \d - \nnd + \dddd, radius=\sr];

    \draw[\c] (top1l\x\y) arc[start angle=180 - \d + \ddd, end angle=180 - \d + \ddd + \nnd - \dddd, radius=\sbr];

    \draw[\c] (bot1l\x\y) arc[start angle=180 + \d - \ddd, end angle=180 + \d - \ddd - \nnd + \dddd, radius=\sbr];

    \draw[\c] (top2l\x\y) arc[start angle=\nd + \ndd - \ndddd, end angle=360 -\nd - \ndd + \ndddd, radius=\ssrad];

    \draw[\c] (top2r\x\y) arc[start angle=\nd-\nddd, end angle=-\nd+\nddd, radius=\ssrad];
}

\foreach \x/\y/\c in {0/0/\B, 1/0/\R} {
    \draw[\c] (top3r\x\y) arc[start angle=180 - \ndp, end angle=180 + \ndp-1, radius=\sr];

    \draw[\c] (top3l\x\y) arc[start angle=\nd + \ndd, end angle=360 -\nd - \ndd, radius=\srad];
    
    \draw[\c] (top3r\x\y) arc[start angle=\nd, end angle=-\nd, radius=\srad];
    
    \draw[\c] (top3l\x\y) arc[start angle=180 - \ndp + \nddd, end angle=180 + \ndp - \nddd - 1.1, radius=\sbr];
}

\foreach \x/\y/\c/\t in {0/3/\B/$\Dbase$, 1/3/\R/$\Mbase$} {
    \coordinate (center) at ($(\x * \rad * 2 + \x * \xmargin, \y * \rad * 2 -  \rad + \y * \ymargin + 2 * \ymargin)$);
    \node[\c] at (center) {\t};
}

\foreach \y/\t in {0/2, 1/1, 2/0} {
    \draw[->] ($(center0\y) + (\rad + \xmargin * 0.2, 0)$) -- ($(center0\y) + (\rad + \xmargin * 0.8, 0)$);
    \node[label=left:$\t:$] at ($(center0\y) - (\rad + \xmargin *0.3, 0)$) {};
}

\node[\G] at ($(center00) - (\rad / 3.8, 0)$) {$S_2$};

\node[\G] at ($(center01) + (\rad / 2.1, 0)$) {$S_1$};

\node[\R] at ($(center10) - (\rad * 0.75, 0)$) {$A$};

\node[\R] at ($(center10) + (\rad * 0.25, 0)$) {$B$};
\end{tikzpicture}
        \caption{This figure illustrates how adding sets into $\blueDbase$, affects the implicit partition $\redMbase$. Initially, $\blueDbase$ and $\redMbase$ are both $\{\V\}$. \textcolor{Red}{$S_i$} represents sets added to $\blueDbase$ at step $i$. As shown here, the number of sets in $\redMbase$ increases by at least one in each step, since the partial intersections of the newly added \textcolor{red}{$S_i$} with sets in $\redMbase$ create smaller minimal sets. Furthermore, in each step, for any set $S$ in $\closure(\redMbase)$, $\pi(S\mid\blueDbase) = 0$. For example, for family $\mathcal{X} = \{\{A\}, \{B\}, \{A, B\}\}$, we have $\pi^{(2)}(\mathcal{X}) = \pi^{(1)}(\mathcal{X} \mid \{S_2\}) = \pi(\mathcal{X} \mid \{S_1, S_2\}) = 0$.}
        \label{fig:time-complexity}
    \end{figure}
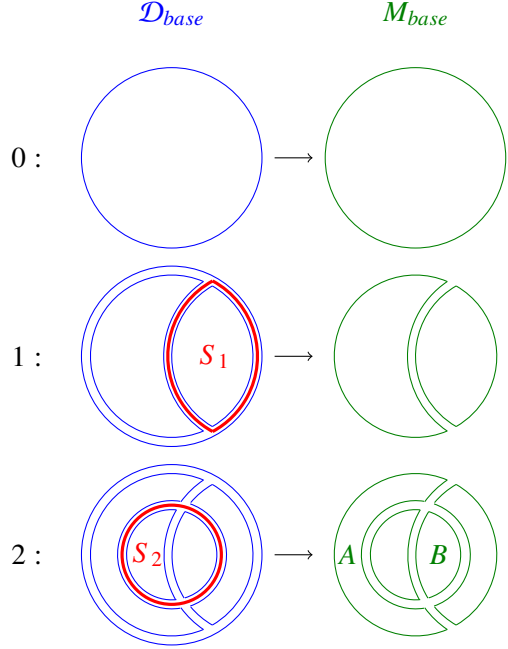

    We now proceed to analyze the recursion depth of the algorithm. After the first recursion step, $\Dbase$ will be non-empty. Then, take $\Mbase$ to be the family of all minimal non-empty sets $S \in \closure(\Dbase)$ such that there exists no other non-empty set $S'\in \closure(\Dbase)$ with $S' \subset S$. We claim that $\Mbase$ is a partition of $\V$. 
    First, any two sets $S, S' \in \Mbase$ cannot have a non-empty intersection, as $S\cap S'$ is in $\closure(\Dbase)$ and would violate the minimality of $S$ and $S'$. 
    Next, let $S$ be the union of the sets in $\Mbase$. 
    As each set in $\Mbase$ is in $\closure(\Dbase)$, $S$ must also be in $\closure(\Dbase)$.
    If $S\neq\V$, $S'=\comp{S}$ would be a non-empty set in $\closure(\Dbase)$ and there must exist a minimal non-empty subset of $S'$ which would be included in $\Mbase$. But $S'$ does not share any elements with any set in $\Mbase$. Therefore, $S'$ must be empty and $S=\V$.
    
    We also note that $\closure(\Mbase)=\closure(\Dbase)$. Since $\Mbase \subseteq \closure(\Dbase)$, it is trivial that
    $\closure(\Mbase) \subseteq \closure(\Dbase)$.
    Now, consider any set $S \in \Dbase$. The intersection of this set with any set $S' \in \Mbase$ must be either $S'$ or the empty set based on the minimality of $S'$. Since $\Mbase$ is a partition of $\V$, $S$ can be written as the union of its intersection with the elements of $\Mbase$, which we have shown to be a union of elements in $\Mbase$. Therefore,  $\Dbase\subseteq\closure(\Mbase)$ and subsequently $\closure(\Dbase)\subseteq\closure(\Mbase)$.
    
    Finally, we show that the number of elements in $\Mbase$ grows in each recursive step. Figure \ref{fig:time-complexity} illustrates how $\Dbase$ and $\Mbase$ change during recursive steps. Let $\Dbase$ and $\Mbase$ be defined as above and $\Dalg$ be the next tight family to be added to $\Dbase$. 
    If the current step is not the final recursive call, then $\pi(\Dalg\mid\Dbase)\neq0$.  
    Then, by Lemma \ref{lm:subadd}, there must be a set $S\in\Dalg$ with $\pi(\{S\}\mid\Dbase)\neq0$. 
    Consider the intersection of $S$ with the sets in $\Mbase$. For some set $S'$ in $\Mbase$, it must be true that $\emptyset\neq S'\cap S \neq S'$. Otherwise, $S$ would be in $\closure(\Mbase)$, and $\pi(\{S\}\mid\Dbase)$ would be equal to zero. 
    Now, $\Mbase'=(\Mbase\setminus \{S'\})\cup \{S\cap S',S'\setminus S\}  $ is a family of non-empty sets in $\closure(\Dbase\cup\Dalg)$ with one more element than $\Mbase$ that partitions $\V$. 
    No minimal set in $\closure(\Dbase\cup\Dalg)$ can contain elements from two distinct sets in $\Mbase'$, as its intersection with either of those sets would contradict its minimality. 
    So, the size of $\Mbase$ in the next step must be at least $\lvert\Mbase'\rvert = \lvert\Mbase\rvert + 1$. As the number of elements in the partition $\Mbase$ cannot grow to more than $\lvert\V\rvert$, the algorithm has a recursive depth of at most $O(\lvert\V\rvert)$. This completes the proof.
\end{proof}


\section{Acknowledgements}
The work is partially supported by DARPA QuICC, ONR MURI 2024 award on Algorithms, Learning, and Game Theory, Army-Research Laboratory (ARL) grant W911NF2410052, NSF AF:Small grants 2218678, 2114269, 2347322.

%
%
%
\bibliographystyle{abbrv}
\bibliography{mybibliography}

\end{document}